\definecolor{LinksColor}{rgb}{0,0.2,0.8}
\newcommand{\lref}[2][]{\hyperref[#2]{#1~\ref*{#2}}}
\newcommand{\lastcorrections}%
{{
 \begin{sloppypar}
    \baselineskip -0.2in
    \tiny\bf\noindent
last corrections:\\
\end{sloppypar}
}}
\newcommand{\margincomment}[1]%
    {{%
      \marginpar{{\tiny\begin{minipage}{0.5in}
                       \begin{flushleft}
                          {#1}
                       \end{flushleft}
                       \end{minipage}
                }}
    }}
\newcommand{\ignore}[1]{}
\newcommand{\emparagraph}[1]{{\smallskip\noindent\emph{#1}}}
\newcommand{\bard}{\bar{d}}
\newcommand{\barn}{\bar{n}}
\newcommand{\bars}{\bar{s}}
\newcommand{\barV}{{\bar{V}}}
\newcommand{\barZ}{{\bar{Z}}}
\newcommand{\bfE}{{\bf{E}}}
\newcommand{\bfP}{{\bf{P}}}
\newcommand{\calA}{{\cal A}}
\newcommand{\calB}{{\cal B}}
\newcommand{\calH}{{\cal H}}
\newcommand{\calP}{{\cal P}}
\newcommand{\calR}{{\cal R}}
\newcommand{\calY}{{\cal Y}}
\newcommand{\calZ}{{\cal Z}}
\newcommand{\calT}{{\cal T}}
\newcommand{\tilden}{{\tilde{n}}}
\newcommand{\tildeV}{{\tilde{V}}}
\newcommand{\tildeZ}{{\tilde{Z}}}
\newcommand{\braced}[1]{{ \left\{ #1 \right\} }}
\newcommand{\ceiling}[1]{{ \lceil #1 \rceil }}
\newcommand{\floor}[1]{{ \lfloor #1 \rfloor }}
\newcommand{\suchthat}{{\;:\;}}
\newcommand{\integers}{{\mathbb Z}}
\newcommand{\polylog}{{\mbox{\rm polylog}}}
\newcommand{\height}{{\textit{height}}}
\newcommand{\depth}{{\textit{depth}}}
\newcommand{\Depth}{D}
\newcommand{\algSimAgDTree}{\mbox{\sc{UnbDTree1}}}
\newcommand{\algLinAgDTree}{\mbox{\sc{UnbDTree2}}}
\newcommand{\algNoAgDTree}{\mbox{\sc{BndDTree}}}
\newcommand{\algMlessDTree}{\mbox{\sc{MlsDTree}}}
\newcommand{\algRandTree}{\mbox{\sc RTree}}
\newcommand{\vlabel}{{\textsf{label}}}
\newcommand{\myAll}{{\textsf{All}}}
\newcommand{\RoundRobin}{{\textsc{RoundRobin}}}
\newcommand{\myRR}{{\textsf{RR}}}
\newcommand{\Selector}[1]{{\textsc{$#1$-Select}}}
\newcommand{\mySel}{{\textsf{Sel}}}
\newtheorem{theorem}{Theorem}
\newtheorem{lemma}{Lemma}
\newtheorem{claim}{Claim}
\newenvironment{myalgorithm}[1]{ \medskip \begin{addmargin}{0.3in} \noindent\hrulefill%
						\medskip\newline\noindent{\bf{Algorithm~{#1}.}}}
						{ \newline\smallskip\noindent\hrulefill\end{addmargin}\medskip }
\newenvironment{bigeqn*}{\large\begin{eqnarray*}}{\end{eqnarray*}}
\newcommand{\half}{{\textstyle\frac{1}{2}}}
\begin{document}

\title{Information Gathering in Ad-Hoc Radio Networks\\ with Tree Topology}

\author{Marek Chrobak\thanks{%
			Department of Computer Science,
         	University of California at Riverside, USA.
			Research supported by NSF grants CCF-1217314 and OISE-1157129.
			}
		\and
		 Kevin Costello\thanks{%
			Department of Mathematics,
			University of California at Riverside, USA.
            Research supported by NSA grant H98230-13-1-0228
			}
		\and
		Leszek Gasieniec\thanks{%
			Department of Computer Science,
			University of Liverpool, UK.
			}
		\and
		Dariusz R. Kowalski\footnotemark[3]
}

\maketitle

\begin{abstract}
We study the problem of information gathering in ad-hoc radio networks without collision detection,
focussing on the case when the network forms a tree, with edges directed towards the root.
Initially, each node has a piece of information that we
refer to as a rumor. Our goal is to design protocols that deliver all rumors to the
root of the tree as quickly as possible. The protocol must complete this
task within its allotted time even though the actual tree topology is unknown
when the computation starts. In the deterministic case, assuming that
the nodes are labeled with small integers, we give
an $O(n)$-time protocol that uses unbounded messages,
and an $O(n\log n)$-time protocol using bounded messages, where any message
can include only one rumor.
We also consider fire-and-forward protocols, in which a node can only transmit
its own rumor or the rumor received in the previous step. 
We give a deterministic fire-and-forward protocol with
running time $O(n^{1.5})$, and we show that it is asymptotically optimal.
We then study randomized algorithms where the nodes are not labelled.
In this model, we give an $O(n\log n)$-time protocol and we
prove that this bound is asymptotically optimal.
\end{abstract}


\section{Introduction}
\label{sec: introduction}



We consider the problem of information gathering in ad-hoc radio networks,
where initially each node has a piece of information called a \emph{rumor},
and all these rumors need to be delivered to a designated target node
as quickly as possible.
A radio network is defined as a directed graph $G$ with $n$ vertices.
At each time step any node $v$ of $G$ may attempt to transmit a message. This
message is sent immediately to all out-neighbors of $v$. However, an out-neighbor $u$ of $v$
will receive this message only if no other in-neighbor of $u$ attempted to
transmit at the same step. The event when two or more in-neighbors of $u$ transmit
at the same time is called a \emph{collision}.  We do not assume
any collision detection mechanism; in other words, not only this $u$ will not
receive any message, but it will not even know that a collision occurred.

One other crucial feature of our model is that the topology of $G$ is
not known at the beginning of computation. We are interested in distributed 
protocols, where the execution of a protocol at a node $v$ depends only
on the identifier (label) of $v$ and the information gathered from
the received messages. Randomized protocols typically do not
use the node labels, and thus they work even if the nodes are
indistinguishable from each other.
The protocol needs to complete its task within the allotted time, 
independently of the topology of $G$.

Several primitives for information dissemination in ad-hoc radio networks have
been considered in the literature. Among these, 
the two most extensively studied are \emph{broadcasting} and \emph{gossiping}.

The \emph{broadcasting problem} is the one-to-all dissemination problem, where
initially only one node has a rumor that needs to be
delivered to all nodes in the network. Assuming that the nodes of $G$
are labelled with consecutive integers $0,1,...,n-1$, 
the fastest known deterministic algorithms for broadcasting
run in time $O(n\log n\log\log n)$~\cite{DeMarco_08} or
$O(n\log^2D)$ \cite{Czumaj_Rytter_broadcast_06}, where $D$ is the diameter of $G$.
The best lower bound on the running time in this model is $\Omega(n\log D)$
\cite{Clementi_etal_distributed_02}. 
(See also \cite{Chrobak_etal_fast_02,Kowalski_Pelc_faster_04,Bruschi_etal_lower_bound_97,Chlebus_etal_broadcast_02} 
for earlier work.)
Allowing randomization, broadcasting can be accomplished in time $O(D\log(n/D)+\log^2n)$ with high probability
\cite{Czumaj_Rytter_broadcast_06}, even if the nodes are not labelled.
This matches the lower bounds in \cite{alon_etal_lower_bound_91,Kushilevitz_Mansour_lower_bound_98}.

The \emph{gossiping problem} is the all-to-all dissemination problem. Here,
each node starts with its own rumor and the goal is to deliver all rumors to each node.
There is no restriction on the size of messages; in particular,
different rumors can be transmitted together in a single message.
With randomization, gossiping can be solved in expected
time $O(n\log^2n)$ \cite{Czumaj_Rytter_broadcast_06} (see
\cite{Liu_Prabhakaran_randomized_02,Chrobak_etal_randomized_04} for earlier work),
even if the nodes are not labelled. 
In contrast, for deterministic algorithms, with nodes labelled $0,1,...,n-1$,
the fastest known gossiping algorithm runs in time
$O(n^{4/3}\log^4n)$ \cite{Gasieniec_etal_det_gossip_04},
following earlier progress in \cite{Chrobak_etal_fast_02,Xu_det_gossip_03}.
(See also the survey in \cite{Gasieniec_survey_gossiping_09} for more information.)
For graphs with arbitrary diameter,
the best known lower bound is $\Omega(n\log n)$, the same as for broadcasting.
Reducing the gap between lower and upper bounds for deterministic gossiping
to a poly-logarithmic factor remains a central open problem in
the study of radio networks with unknown topology.

Our work has been inspired by this open problem. It is easy to see that
for arbitrary directed graphs gossiping is equivalent to information
gathering, in the following sense. On one hand, trivially, any protocol for
gossiping also solves the problem of gathering. On the other hand, we can
apply a gathering protocol and follow it with a protocol that broadcasts all
information from the target node $r$; these two protocols combined
solve the problem of gossiping. So if we can solve information gathering in time $O(n\,\polylog(n))$,
then we can also solve gossiping in time $O(n\,\polylog(n))$.


\paragraph{Our results.}
To gain better insight into the problem of gathering information in radio
networks, we focus on networks with tree topology. Thus we assume that
our graph is a tree $\calT$ with root $r$ and with all edges directed
towards $r$. In this model, a gathering protocol knows that the network
is a tree, but it does not know its topology. 

We consider several variants of this problem, for deterministic or randomized algorithms,
and with or without restrictions on the message size or processor memory. 
We provide the following results:
\begin{itemize}
\item
In the first part of the paper we study deterministic algorithms, under the assumption that
the nodes of $\calT$ are labelled $0,1,...,n-1$. 
First, in Section~\ref{sec: det algorithms with aggregation}, we examine the 
model without any bound on the message size. In particular, such protocols are allowed to aggregate
any number of rumors into a single message. (This is a standard model in existing gossiping 
protocols in unknown radio networks; see, for example, the survey in 
\cite{Gasieniec_survey_gossiping_09}.)
We give an optimal, $O(n)$-time protocol using unbounded messages.
\item
Next, in Section~\ref{sec: deterministic without aggregation},
we consider the model with bounded messages, where a message may contain
only one rumor. For this model we provide
an algorithm with running time $O(n\log n)$.
\item
In Section~\ref{sec: det fireandforward protocols}
we consider an even more restrictive model of protocols with bounded messages,
that we call fire-and-forward protocols. In those protocols, at each step, a node
can only transmit either its own rumor or the rumor received in the previous step (if any).
For deterministic fire-and-forward protocols we provide a protocol with running time
$O(n^{1.5})$ and we show a matching lower bound of $\Omega(n^{1.5})$.
\item
We then turn our attention to randomized algorithms 
(Sections~\ref{sec: nlogn randomized} and~\ref{sec: nlogn randomized lower bound}).
For randomized algorithms we assume that the nodes are not labelled.
In this model, we give an $O(n\log n)$-time gathering protocol and we prove a matching lower bound
of $\Omega(n\log n)$.
The upper bound is achieved by a simple fire-and-forward protocol that, 
in essence, reduces the problem to the coupon-collector problem. Our main
contribution here is the lower bound proof. For the special case of trees of depth two, we 
show that our
lower bound is in fact optimal even with respect to the leading constant. 
\end{itemize}

All our algorithms for deterministic protocols easily extend to the model
where the labels are drawn from a set $0,1,...,L$ where $L = O(n)$, without
affecting the running times. If $L$ is arbitrary, the
algorithms for unbounded and bounded message models can be
implemented in times, respectively, $O(n^2\log L)$ and $O(n^2\log n\log L)$.

We remark that some communication protocols for radio networks
use forms of information gathering on trees as a sub-routine; see for example
\cite{Bar-Yehuda_I_I_93,chlebus_K_R_09,Khabbazian_kowalski_11}.
However, these solutions typically focus on undirected graphs, which allow feedback.
They also solve relaxed variants of information gathering where the goal is to gather
only a fraction of rumors in the root, which was sufficient for the applications
studied in these papers (since, with feedback, such a procedure can be repeated until
all rumors are collected).
In contrast, in our work, we study directed trees without any feedback mechanism, and
we require all rumors to be collected at the root.

Our work leaves several open problems, with the most intriguing one being whether
there is a deterministic algorithm for trees that does not use aggregation and works in time
$o(n\log n)$. We conjecture that such an algorithm exists. 
It would also be interesting to refine our results by expressing the 
running time in terms of the tree depth $D$ and the maximum degree $\Delta$.
In terms of more general research directions, the next goal should be
to establish some bounds for gathering in arbitrary acyclic graphs.


\section{Preliminaries}
\label{sec: preliminaries}



\paragraph{Radio networks.}
In this section we give formal definitions of radio networks and gathering protocols.
We define a radio network as a directed graph $G = (V,E)$ with $n$ nodes,
with each node assigned a different label from  the set $[n] = \braced{0,1,...,n-1}$.
Denote by $\vlabel(v)$ the label assigned to a node $v\in V$.
One node $r$ is distinguished as the \emph{target} node, and we assume that $r$
is reachable from all other nodes.
Initially, at time $0$, each node $v$ has some piece of information that
we will refer to as \emph{rumor} and we will denote it by $\rho_v$. 
The objective is to deliver all rumors $\rho_v$ to $r$ as quickly as
possible, according to the rules described below.

The time is discrete, namely it consists of time steps numbered with
non-negative integers $0,1,2,...$.
At any step, a node $v$ may be either in the \emph{transmit state} or 
the \emph{receive state}.
A gathering protocol $\calA$ determines, for each node $v$ and
each time step $t$, whether $v$ is in the transmit or receive state at time $t$. If 
$v$ is in the transmitting state, then $\calA$ also determines what message
is transmitted by $v$, if any. This specification of $\calA$
may depend only on the label of $v$, time $t$, 
and on the content of all messages received by $v$ until time $t$.
We stress that, with these restrictions, $\calA$ does not depend on the
topology of $G$ and on the node labeling. 

All nodes start executing the protocol simultaneously at time $0$. 
If a node $v$ transmits at a time $t$, 
the transmitted message is sent immediately to all out-neighbors of $v$, that is to all $u$ such that
$(v,u)$ is an edge. If $(v,u)$ and $(v',u)$ are edges and both $v,v'$ transmit at time $t$
then a \emph{collision} at $u$ occurs and $u$ does not receive a message. 
More specifically, $u$ will receive this message if and only if 
(i) $u$ is in the receive state at time $t$, and (ii)  no collision at $u$ occurs at time $t$.
We do not assume any feedback from the transmission channel or any collision detection features, so,
in case of a collision,
neither the sender nor any node within its range knows that a collision occurred.

Throughout the paper, we will focus on the case when the graph is a tree with root $r$ and
with all edges directed towards the root $r$. We will
typically use notation $\calT$ for this tree. 

The running time of a deterministic gathering protocol $\calA$ is defined as the minimum time
$T(n)$ such that, for any tree $\calT$ with root $r$ and $n$ nodes, any assignment of labels
from $[n]$ to the nodes of $\calT$, and any node $v$,
the rumor $\rho_v$ of $v$ is delivered to $r$ no later than at step $T(n)$. In case of
randomized protocols, we evaluate them either using the expectation of their
running time $T(n)$, which is now a random variable, or by showing that $T(n)$ does
not exceed a desired time bound with high probability.

\smallskip

We consider three types of gathering protocols:
\begin{description}
	\item{\emph{Unbounded messages:}} In this model a node can transmit
			arbitrary information in a single step.  In particular, multiple
			rumors can be aggregated into a single message.
	\item{\emph{Bounded messages:}} In this model no aggregation of rumors
			is allowed. Each message consists of at most one rumor and 
			$O(\log n)$ bits of additional information.
	\item{\emph{Fire-and-forward:}} In a fire-and-forward protocol, a
	node can either transmit its own rumor or the rumor received in the
	previous step, if any. Thus a message originating from
	a node travels towards the root one hop at a time, until either it vanishes
	(due to collision, or being dropped by a node that fires),
	or it successfully reaches the root. 
\end{description}
For illustration, consider a protocol called $\RoundRobin$, where
all nodes transmit in a cyclic order, one at a time.
Specifically, at any step $t$, $\RoundRobin$ transmits from the node $v$ 
with $\vlabel(v) = t\bmod{n}$, with 
the transmitted message containing all rumors received by $v$ until time $t$.
The running time is $O(n^2)$, because initially each rumor $\rho_u$ is at distance at most
$n$ from the root and in any consecutive $n$ steps it will decrease its distance to the 
root by at least $1$. ($\RoundRobin$ has been used as a subroutine in many protocols in
the literature, and it achieves running time $O(n^2)$ even for gossiping in arbitrary
networks, not only for gathering in trees.)

$\RoundRobin$ can be adapted to use only bounded messages for information gathering
in trees. At any round $t$ and any node $v$, if $v$ has the rumor $\rho_u$ of node $u$
such that $\vlabel(u) = t\bmod{n}$, and $v$ has not transmitted $\rho_u$ before, then
$v$ transmits $\rho_u$ at time $t$. Since $\calT$ is a tree, each rumor follows
the unique path towards the root, so no collisions will occur,
and after at most $n^2$ steps $r$ will receive all rumors.


\section{Some Structure Properties of Trees}
\label{sec: some structure properties of trees}



The running times of our algorithms in Sections~\ref{sec: det algorithms with aggregation}
and~\ref{sec: deterministic without aggregation} depend on the
distribution of high-degree nodes within a tree. To capture the structure
of this distribution we use the concept of $\gamma$-depth 
which, roughly, measures how ``bushy'' the tree is. We define this
concept in this section and establish its properties needed for the analysis of
our algorithms.


\paragraph{$\gamma$-Depth of trees.}
Let $\calT$ be the given tree network with root $r$ and $n$ nodes. Fix an integer
$\gamma$ in the range $2\le \gamma\le n-1$.
We define the \emph{$\gamma$-height} of each node $v$ of $\calT$, denoted $\height_\gamma(v)$,
as follows. If $v$ is a leaf then $\height_\gamma(v) = 0$.
If $v$ is an internal node then let $g$ be the maximum $\gamma$-height of a child
of $v$. If at least $\gamma$ children of $v$ have $\gamma$-height equal $g$
then $\height_\gamma(v) = g+1$; otherwise $\height_\gamma(v) = g$.
(For $\gamma = 2$, our definition of $2$-height is equivalent to that
of Strahler numbers. See, for example, \cite{Strahler_52,Viennot_02}.)
We then define the \emph{$\gamma$-depth of $\calT$} as
$\Depth_\gamma(\calT) = \height_\gamma(r)$.

In our proofs, we may also consider trees other than the input tree $\calT$.
If $\calH$ is any tree and $v$ is a node of $\calH$ then, to avoid
ambiguity, we will write $\height_\gamma(v,\calH)$ for the $\gamma$-height of $v$ with
respect to $\calH$. Note that if $\calH$ 
is a subtree of $\calT$ and $v\in \calH$ then, trivially,
$\height_\gamma(v,\calH) \le \height_\gamma(v)$.

By definition, the $1$-height of a node is the same as its height, namely the longest distance
from this node to a leaf in its subtree. For a tree, its $1$-depth is equal to its depth.
Figure~\ref{fig: gamma-height example} shows an example of a tree whose depth equals $4$,
$2$-depth equals $3$, and $3$-depth equals $1$.

\begin{figure}[ht]
\begin{center}
\includegraphics[width=4in]{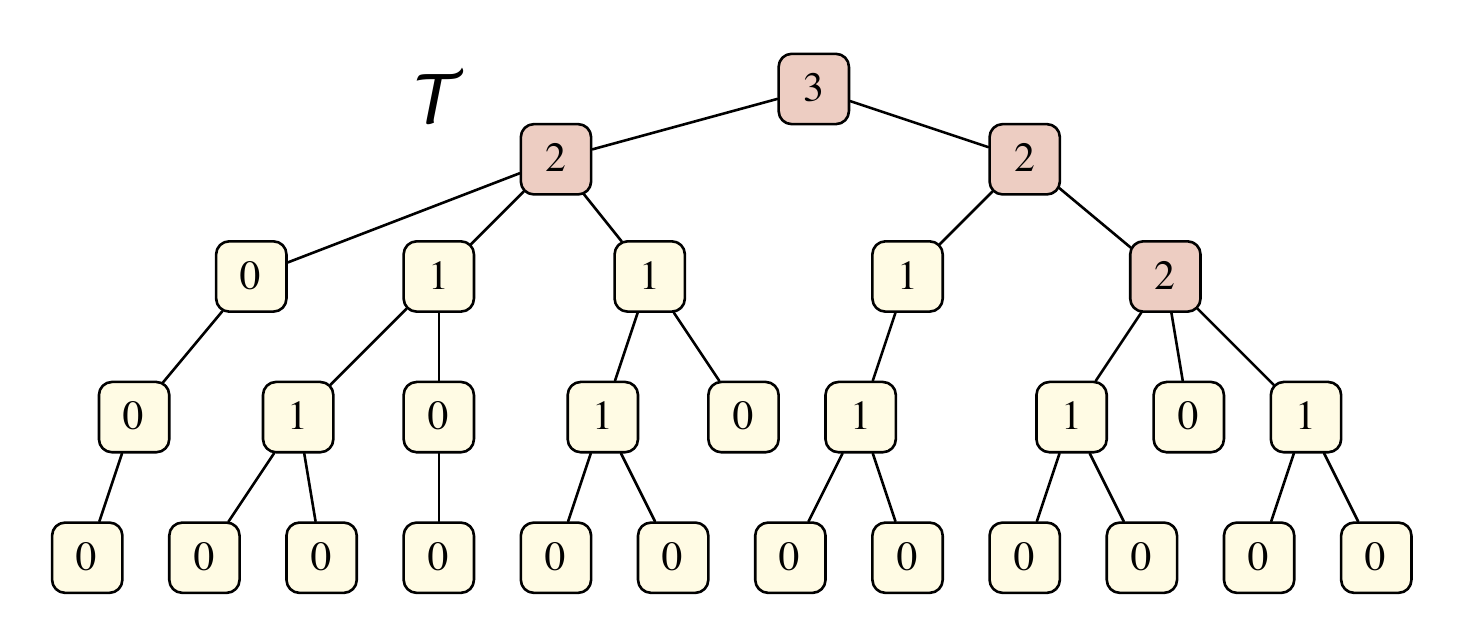}
\caption{An example illustrating the concept of $\gamma$-depth of trees, for $\gamma = 1,2,3$. 
The depth of this tree $\calT$ is $4$.
The number in each node is its $2$-height; thus the $2$-depth of this tree is $3$. 
All light-shaded nodes have $3$-height equal $0$ and the four
dark-shaded nodes have $3$-height equal $1$, so the $3$-depth of this tree is $1$.}
\label{fig: gamma-height example}
\end{center}
\end{figure}

The lemmas below spell out some simple properties of $\gamma$-heights of nodes
that will be useful for the analysis of our algorithms. (In particular,
Lemma~\ref{lem: gamma-heights} generalizes the bound from \cite{Viennot_02} for
$\gamma =2$.)
If $v$ is a node of $\calT$ then $\calT_v$ will denote
the subtree of $\calT$ rooted at $v$ and containing all descendants of $v$.


\begin{lemma}\label{lem: gamma-heights}
$\Depth_\gamma(\calT) \le \log_\gamma n$.
\end{lemma}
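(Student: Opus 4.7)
The plan is to show, by induction on $h$, that every node $v$ with $\height_\gamma(v) = h$ satisfies $|\calT_v| \ge \gamma^h$. Applying this to the root $r$, for which $|\calT_r| = n$ and $\height_\gamma(r) = \Depth_\gamma(\calT)$, then immediately gives $n \ge \gamma^{\Depth_\gamma(\calT)}$, hence $\Depth_\gamma(\calT) \le \log_\gamma n$.

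The base case $h = 0$ is trivial, since any node $v$ satisfies $|\calT_v| \ge 1 = \gamma^0$. For the inductive step, fix an internal node $v$ with $\height_\gamma(v) = h \ge 1$ and let $g$ be the maximum $\gamma$-height among the children of $v$. I would split into two cases according to the definition of $\gamma$-height. If $h = g+1$, then at least $\gamma$ children of $v$ have $\gamma$-height equal to $g = h-1$, so by the inductive hypothesis each of their subtrees has at least $\gamma^{h-1}$ nodes, yielding
\[
|\calT_v| \;\ge\; \gamma \cdot \gamma^{h-1} \;=\; \gamma^h.
\]
Otherwise $h = g$, and some child $u$ of $v$ already has $\height_\gamma(u) = h$; by the inductive hypothesis $|\calT_u| \ge \gamma^h$, so $|\calT_v| \ge |\calT_u| \ge \gamma^h$ as well.

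There is no serious obstacle here: the only subtle point is to be careful with the two-case definition of $\height_\gamma$, since when $h = g$ the inductive hypothesis must be reapplied to a child of the same $\gamma$-height rather than to $\gamma$ children of smaller $\gamma$-height. Once this bookkeeping is right, the final bound follows by instantiating the claim at $v = r$.
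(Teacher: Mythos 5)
Your overall strategy is exactly the paper's: prove $|\calT_v| \ge \gamma^{\height_\gamma(v)}$ for every node $v$ and instantiate at the root. However, as written your induction has a circularity. You declare the induction to be on $h$, the $\gamma$-height, and then in the case $h = g$ you apply the inductive hypothesis to a child $u$ with $\height_\gamma(u) = h$ --- i.e.\ to a node with the \emph{same} value of the induction parameter. Your closing remark makes this explicit (``the inductive hypothesis must be reapplied to a child of the same $\gamma$-height''), but under induction on $h$ that is precisely what you are not entitled to do: the hypothesis only covers $\gamma$-heights strictly below $h$, so this case of your argument does not go through as stated.

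The fix is small and is what the paper does: induct on the ordinary height of $v$ (its distance to the deepest leaf), or equivalently do structural induction on the subtree. Then in the case $h = g$ the child $u$ has the same $\gamma$-height but strictly smaller ordinary height, so the hypothesis legitimately applies and gives $|\calT_v| \ge |\calT_u| \ge \gamma^h$; the other case ($\ge \gamma$ children of $\gamma$-height $h-1$) is unchanged. Alternatively, you could keep induction on $h$ but descend along the chain of descendants of $\gamma$-height $h$ until reaching one all of whose children have $\gamma$-height less than $h$, which must then have at least $\gamma$ children of $\gamma$-height $h-1$. Either repair yields the paper's bound; only the choice of induction parameter needs correcting.
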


\begin{proof}
It is sufficient to show that $|\calT_v| \ge \gamma^{\height_\gamma(v)}$ holds for each node $v$.
The proof is by simple induction with respect to the height of $v$.

If $v$ is a leaf then the inequality is trivial. So suppose now that
$v$ is an internal node with $\height_\gamma(v) = g$.
If $v$ has a child $u$ with $\height_\gamma(u) = g$ then, by induction,
$|\calT_v| \ge |\calT_u| \ge \gamma^g$.
If all children of $v$ have $\gamma$-height smaller than $g$ then
$v$ must have at least $\gamma$ children with $\gamma$-height equal
$g-1$. So, by induction, we get
$|\calT_v|\ge \gamma\cdot \gamma^{g-1} = \gamma^g$.
\end{proof}

We will be particularly interested in subtrees of $\calT$ consisting of the nodes
whose $\gamma$-height is above a given threshold. Specifically, for
$h = 0,1,...,\Depth_\gamma(\calT)$, let
$\calT^{\gamma,h}$ be the subtree of $\calT$ induced by the nodes
whose $\gamma$-height is at least $h$
(see Figure~\ref{fig: tree reduced}). Note that, since $\gamma$-heights
are monotonically non-decreasing on the paths from leaves to $r$,
$\calT^{\gamma,h}$ is indeed a subtree of $\calT$ rooted at $r$.
In particular, for $h = 0$ we have $\calT^{\gamma,0} = \calT$.

For any $h$, $\calT - \calT^{\gamma,h}$ is a collection of subtrees
of type $\calT_v$, where $v$ is a node of $\gamma$-height less than $h$ whose
parent is in $\calT^{\gamma,h}$.
When $h=1$, all such subtrees contain only nodes of $\gamma$-height equal $0$, which
implies that they all have degree less than $\gamma$.
In particular, for $\gamma = 2$, each such subtree $\calT_v$ is a path from a leaf of
$\calT$ to $v$.

\begin{figure}[ht]
\begin{center}
\includegraphics[width=2.5in]{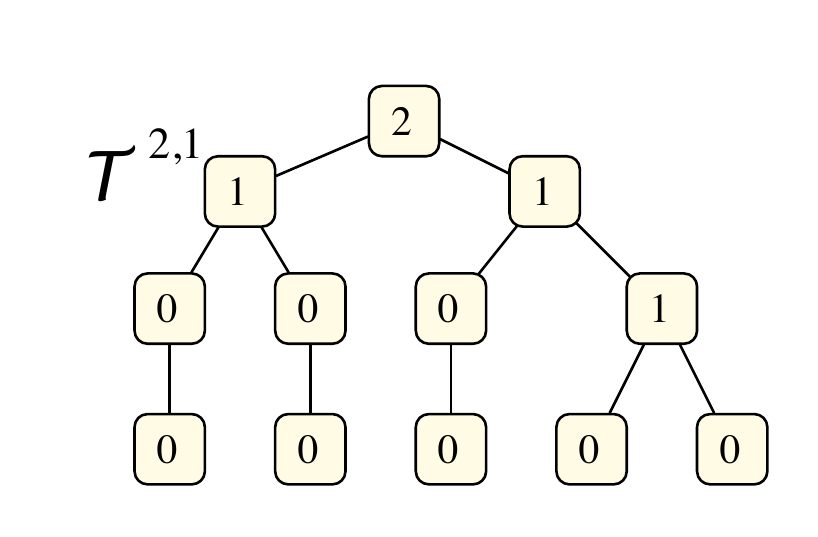}
\caption{The subtree $\calT^{2,1}$ obtained from tree $\calT$ in Figure~\ref{fig: gamma-height example}.}
\label{fig: tree reduced}
\end{center}
\end{figure}

 
\begin{lemma}\label{lem: subtrees gamma-heights}
For any node $v\in\calT^{\gamma,h}$ we have
$\height_\gamma(v,\calT^{\gamma,h}) = \height_\gamma(v) - h$.
Thus, in particular, we also have $\Depth_\gamma(\calT^{\gamma,h}) = \Depth_\gamma(\calT)-h$.
\end{lemma}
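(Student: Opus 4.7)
The plan is to prove the main equality $\height_\gamma(v,\calT^{\gamma,h}) = \height_\gamma(v) - h$ by structural induction on the subtree $\calT_v$; the second statement about $\Depth_\gamma(\calT^{\gamma,h})$ then follows immediately by taking $v = r$, since $\height_\gamma(r) = \Depth_\gamma(\calT)$ by definition.

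For the base case I would take $v$ to be a leaf of $\calT$; then $\height_\gamma(v) = 0$, so membership $v\in\calT^{\gamma,h}$ forces $h=0$, and both sides equal $0$. For the inductive step, the key preliminary observation I would record is that the children of $v$ in $\calT^{\gamma,h}$ are exactly those children $u$ of $v$ in $\calT$ whose $\gamma$-heights (in $\calT$) are at least $h$; this is just the definition of $\calT^{\gamma,h}$ combined with the fact (already noted before the lemma) that $\gamma$-heights are non-decreasing along root-ward paths, so no child of $v$ is cut off from $v$ while $v$ itself survives.

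Let $g$ be the maximum $\gamma$-height (in $\calT$) of a child of $v$. I would split into two cases. If $g\ge h$, then every child $u$ of $v$ with $\height_\gamma(u)\ge h$ lies in $\calT^{\gamma,h}$ and, by the inductive hypothesis applied to the subtree $\calT_u$, satisfies $\height_\gamma(u,\calT^{\gamma,h}) = \height_\gamma(u) - h$. In particular the maximum child $\gamma$-height inside $\calT^{\gamma,h}$ is $g-h$, and the number of children achieving this value is exactly the number of children of $v$ in $\calT$ achieving $g$ (since those below $h$ have been discarded and those strictly between $h$ and $g$, if any, map to smaller shifted values). Applying the definition of $\gamma$-height to $v$ in $\calT^{\gamma,h}$ then yields $(g-h)+1$ or $g-h$ depending on whether this count reaches $\gamma$, which matches $\height_\gamma(v) - h$ in each subcase. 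If instead $g < h$, then $v$ has no children in $\calT^{\gamma,h}$, so $\height_\gamma(v,\calT^{\gamma,h}) = 0$; but the hypothesis $v\in\calT^{\gamma,h}$ combined with $\height_\gamma(v)\le g+1\le h$ forces $\height_\gamma(v) = h$, again giving equality.

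The only mildly tricky point, and the one I would flag explicitly, is the borderline case where $g = h$ and fewer than $\gamma$ children attain $g$: here the children with height exactly $h$ become leaves of $\calT^{\gamma,h}$ (by the induction hypothesis, since $\height_\gamma(u)-h = 0$), so we must carefully check that the ``$<\gamma$'' count in $\calT$ is the same ``$<\gamma$'' count in $\calT^{\gamma,h}$, which it is because the children with $\gamma$-height below $h$ in $\calT$ are simply not present in $\calT^{\gamma,h}$ and thus never contended for the maximum anyway. Once this bookkeeping is done, the two defining clauses of $\gamma$-height line up perfectly and the induction closes.
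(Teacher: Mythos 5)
Your proof is correct, and it is the same basic strategy as the paper's — an induction over the tree that checks the two clauses of the $\gamma$-height definition line up after the shift by $h$ — but the decomposition is different. The paper proves only the case $h=1$ (by induction on the height of $v$ in $\calT' = \calT^{\gamma,1}$, with leaves of $\calT'$ as the base case, where the bookkeeping is light because every discarded child has $\gamma$-height exactly $0$) and then extends to general $h$ by an outer induction on $h$, implicitly using that $\calT^{\gamma,h} = (\calT^{\gamma,1})^{\gamma,h-1}$, which is itself a consequence of the $h=1$ statement. You instead run a single structural induction on $\calT_v$ that handles arbitrary $h$ directly, at the price of the extra case split $g \ge h$ versus $g < h$ and the counting argument you rightly flag (that discarding children of $\gamma$-height below $h$ does not change which children attain the maximum, so the ``at least $\gamma$'' test gives the same answer in $\calT$ and in $\calT^{\gamma,h}$). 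Your handling of the borderline subcase $g = h$ with fewer than $\gamma$ children at height $g$, and of the case $g < h$ (where $v\in\calT^{\gamma,h}$ forces $\height_\gamma(v) = h$), is exactly the bookkeeping needed, so the induction closes; the trade-off is simply one slightly heavier induction in place of the paper's two lighter ones.
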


\begin{proof}
It is sufficient to prove the lemma for the case $h=1$, which can be then extended to
arbitrary values of $h$ by induction.

So let $h =1$ and $\calT' = \calT^{\gamma,1}$.
The proof is by induction on the height of $v$ in $\calT'$.
If $v$ is a leaf of $\calT'$ then $\height_\gamma(v,\calT') = 0$, by definition.
All children of $v$ in $\calT$ must have $\gamma$-height
equal $0$, so $\height_\gamma(v) = 1$ and thus the lemma holds for $v$.

Suppose now that $v$ is not a leaf of $\calT'$ and that the lemma holds for all children of $v$.
This means that for each child $u$ of $v$ in $\calT$, either  $\height_\gamma(u)=0$ 
(that is, $u\notin \calT'$) or $\height_\gamma(u,\calT') = \height_\gamma(u)-1$.
Let $\height_\gamma(v) = f$.
If $v$ has a child with $\gamma$-height equal $f$ then there are fewer than $\gamma$
such children. By induction, these children will have $\gamma$-height in $\calT'$
equal $f-1$, and each other child that remains in $\calT'$
has $\gamma$-height in $\calT'$ smaller than $f-1$.
So $\height_\gamma(v,\calT') = f-1$.
If all children of $v$ have $\gamma$-height smaller than $f$ then 
$f\ge 2$ (for otherwise $v$ would have to be a leaf of $\calT'$) and
$v$ must have more than $\gamma$ children with $\gamma$-height $f-1$. 
These children will be in $\calT'$ and will have
$\gamma$-height in $\calT'$ equal $f-2$. So $\height_\gamma(v,\calT') = f-1$ in this case as well,
completing the proof.
\end{proof}


\section{Deterministic Algorithms with Aggregation}
\label{sec: det algorithms with aggregation}



As explained in Section~\ref{sec: preliminaries}, it is easy to achieve
information gathering in time $O(n^2)$. In this section
we prove that using unbounded-size messages this running time can be improved to $O(n)$. 
This is optimal for arbitrary $n$-node trees, since no better time can be achieved 
even for paths or star graphs. We start with a simpler protocol with running time $O(n\log n)$, 
that we use to introduce our terminology and techniques. 

We can make some assumptions about the protocols in this section that will simplify their
presentation.  Since we use unbounded messages, we can assume that each 
transmitted message contains all information received by the transmitting node,
including all received rumors. We also assume that all
rumors are different, so that each node can keep track of the number
of rumors collected from its subtree. To ensure this we can,
for example, have each node $v$ append its label to its rumor $\rho_v$.

We will also assume that each node knows the labels of its children. To acquire
this knowledge, we can precede any protocol by a preprocessing phase where
nodes with labels $0,1,...,n-1$ transmit, one at a time, in this order. 
Thus after $n$ steps each node will receive the messages from its children. 
This does not affect the asymptotic running times of our protocols.



\subsection{Warmup: a Simple $O(n\log n)$-Time Algorithm}
\label{sec: simple det with aggregation}



We now present an algorithm for information gathering on trees that runs in time $O(n\log n)$. 
In essence, any node waits until it receives the messages from its children,
then for $2n$ steps it alternates $\RoundRobin$ steps with steps when it always attempts to transmit.
A more detailed specification of the algorithm follows.


\begin{myalgorithm}{$\algSimAgDTree$}
We divide the time steps into \emph{rounds}, where round $s$ consists of two consecutive steps $2s$ and $2s+1$,
which we call, respectively, the $\myRR$-step and the $\myAll$-step of round $s$. 

For each node $v$ we define its \emph{activation round}, denoted $\alpha_v$, as follows.
If $v$ is a leaf then $\alpha_v = 0$. For
any other node $v$, $\alpha_v$ is the first round such that $v$ has 
received messages from all its children when this round is about to start.

For each round $s = \alpha_v,\alpha_v+1, ...,\alpha_v+n-1$, 
$v$ transmits in the $\myAll$-step of round $s$,
and if $\vlabel(v) = s\bmod{n}$ then it also transmits
in the $\myRR$-step of round $s$.
In all other steps, $v$ stays in the receiving state. 
\end{myalgorithm}


\medskip

\emparagraph{Analysis.}
For any node $v$ we say that $v$ is \emph{dormant} in rounds $0,1,...,\alpha_v-1$,
$v$ is \emph{active} in rounds $\alpha_v,\alpha_v+1,...,\alpha_v+n-1$,
and that $v$ is \emph{retired} in every round thereafter. 
Since $v$ will make at least one
$\myRR$-transmission when it is active, $v$ will successfully transmit its message
(and thus all rumors from its subtree $\calT_v$) to its parent before retiring, and before
this parent is activated.
Therefore, by a simple inductive argument, Algorithm~$\algSimAgDTree$
is correct, namely that eventually $r$ will receive all rumors from $\calT$.

This inductive argument shows in fact that, at any round, Algorithm~$\algSimAgDTree$
satisfies the following two invariants:
(i) any path in $\calT$ from a leaf to $r$ consists of a segment of retired nodes,
followed by a segment of active nodes, which is then followed by a segment of
dormant nodes (each of these segments possibly empty); and
(ii) any dormant node has at least one active descendant.
 

\begin{lemma}\label{lem: nlogn height h}
Let $d = \Depth_2(\calT)$. For any $h = 0,1,...,d$ and
any node $v$ with $\height_2(v) = h$, $v$ gets activated no later than
in round $2nh$, that is $\alpha_v \le 2nh$.
\end{lemma}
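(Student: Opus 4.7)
The plan is to induct on $h = \height_2(v)$. The base case $h = 0$ asks that $\alpha_v \le 0$; this holds by definition when $v$ is a leaf (the standard base case). For $h \ge 1$, the structural characterization of $2$-height gives two sub-cases for $v$: \emph{case (a)}, every child of $v$ has $2$-height at most $h-1$, with at least two attaining $2$-height exactly $h-1$; or \emph{case (b)}, exactly one child $u^*$ of $v$ has $2$-height $h$ and all other children have $2$-height at most $h-1$.

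Case (a) is direct. By the inductive hypothesis, each child $u$ satisfies $\alpha_u \le 2n(h-1)$, so its $n$-round active window ends by round $2nh - n - 1$, and within that window $u$ makes exactly one guaranteed $\myRR$-transmission. Since all labels in $\calT$ are distinct, these $\myRR$-transmissions do not collide at $v$, so $v$ receives every child's message by round $2nh - n - 1$, giving $\alpha_v \le 2nh - n \le 2nh$.

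Case (b) is the heart of the argument. Because $u^*$ has the same $2$-height $h$ as $v$, the inductive hypothesis does not apply to $u^*$ directly, so I would trace the maximal chain $v = v_0, v_1 = u^*, v_2, \ldots, v_\ell$ where each $v_i$ with $i < \ell$ has $2$-height $h$ and $v_{i+1}$ is its unique $2$-height-$h$ child, and $v_\ell$ has $2$-height $h$ with all children of $2$-height $\le h-1$ (so case (a) applies to $v_\ell$, yielding $\alpha_{v_\ell} \le 2nh - n$). The key observation is that every other child of $v_i$ (for $0 \le i < \ell$) has $2$-height $< h$ and, by the inductive hypothesis, retires by round $2nh - n$. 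Consequently, from round $2nh - n$ onward, $v_{i+1}$ is the sole potentially active child of $v_i$ and its $\myAll$-step transmissions become collision-free at $v_i$. A short case analysis then shows that $v_i$ receives $v_{i+1}$'s message by round $\max(\alpha_{v_{i+1}}, 2nh - n)$ — either through an uncontested $\myAll$-step at that round, or through the guaranteed $\myRR$-step when $v_{i+1}$'s active window lies entirely before round $2nh - n$.

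Combined with the fact that $v_i$ receives its other children's messages by round $2nh - n - 1$ via their $\myRR$-transmissions, this yields the recurrence $\alpha_{v_i} \le \max(\alpha_{v_{i+1}}, 2nh - n) + 1$. Starting from $\alpha_{v_\ell} \le 2nh - n$, reverse induction on $i$ gives $\alpha_{v_i} \le 2nh - n + (\ell - i)$. Since the chain consists of $\ell + 1$ distinct nodes of $\calT$, we have $\ell \le n - 1$, and so $\alpha_v = \alpha_{v_0} \le 2nh - n + \ell \le 2nh - 1 < 2nh$ as required. The main obstacle is recognizing that naive single-level induction fails in case (b): the inductive hypothesis on $u^*$ gives only the weak bound $\alpha_{u^*} \le 2nh$, which does not propagate up to $v$. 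The chain argument overcomes this by exploiting the collision-free $\myAll$-step transmissions that become available once the lower-$2$-height siblings retire, with the accumulated slack of one round per chain step remaining bounded by $\ell \le n - 1$.
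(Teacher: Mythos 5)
Your argument is correct and follows essentially the same route as the paper's proof: induction on the $2$-height, the structural fact that the $2$-height-$h$ descendants of $v$ form a path (your chain $v_0,\dots,v_\ell$ is exactly the paper's $\calZ_v$ for $\calZ=\calT^{2,h}$, cf.\ Lemma~\ref{lem: subtrees gamma-heights}), and the observation that once the lower-height siblings have retired the $\myAll$-step transmissions along that path are collision-free and advance activation one hop per round, with the path length bounded by $n$. The differences are only bookkeeping --- you solve an explicit backward recurrence $\alpha_{v_i}\le\max(\alpha_{v_{i+1}},2nh-n)+1$ with the $\myRR$-step as a fallback instead of invoking the paper's invariant (ii) to locate the highest active node and propagating forward --- and your base case shares the paper's slight imprecision (a node of $2$-height $0$ need not be a leaf, e.g.\ an internal node of a hanging path, for which $\alpha_v\le 0$ fails), a shared off-by-one that costs only an additive $O(n)$ per level and does not affect the $O(n\log n)$ conclusion.
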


\begin{proof}
The proof is by induction on $h$. By the algorithm, the lemma
trivially holds for $h=0$.
Suppose that the lemma holds for $h-1$ and consider a node $v$
with $\height_2(v) = h$. To reduce clutter, denote
$\calZ = \calT^{2,h}$. From Lemma~\ref{lem: subtrees gamma-heights},
we have that $\height_2(v,\calZ) = 0$, which implies that
$\calZ_v$ is a path from a leaf of $\calZ$ to $v$.
Let $\calZ_v = v_1,v_2,...,v_q$ be this path, where $v_1$ is a leaf of $\calZ$ and $v_q = v$.

We now consider round $s = 2n(h-1)+n$.
The nodes in $\calT-\calZ$ have $2$-height at most $h-1$, so,
by the inductive assumption, they are activated 
no later than in round $2n(h-1)$, and therefore in round $s$ they are already retired.
If $\alpha_v \le s$ then $\alpha_v \le 2nh$, and we are done. 
Otherwise, $v$ is dormant in round $s$. Then, by invariant (ii) above, 
at least one node in $\calZ_v$ must be active. Choose the largest
$p$ for which $v_p$ is active in round $s$. In round $s$ and later, all children of
the nodes $v_p,v_{p+1},...,v_q$ that are not on $\calZ_v$ do not transmit, since
they are already retired. This implies that for each $\ell = 0,...,q-p-1$,
node $v_{p+\ell+1}$ will get activated in round $s+\ell+1$
as a result of the $\myAll$-transmission from node $v_{p+\ell}$.
In particular, we obtain that 
$\alpha_v \le s+q-p \le  2nh$, completing the proof of the lemma.
\end{proof}

We have $\height_2(r) = d$ and $d = O(\log n)$, by Lemma~\ref{lem: gamma-heights}.
Applying Lemma~\ref{lem: nlogn height h}, this implies 
that $\alpha_r \le 2nd = O(n\log n)$, 
which gives us that the overall running time is $O(n\log n)$. 
Summarizing, we obtain the following theorem.


\begin{theorem}\label{thm: nlog time with aggregation}
For any tree with $n$ nodes and any assignment of labels,
Algorithm~$\algSimAgDTree$ completes information gathering in time $O(n\log n)$.
\end{theorem}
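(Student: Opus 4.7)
The plan is to combine the activation-round bound from Lemma~\ref{lem: nlogn height h} with the logarithmic bound on the $2$-depth from Lemma~\ref{lem: gamma-heights}, applied to the root. In other words, almost all of the work has already been done; the theorem is essentially a corollary.

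First I would observe that, by the inductive (correctness) argument preceding Lemma~\ref{lem: nlogn height h}, the moment a node $v$ becomes active is exactly the moment at which $v$ has received, aggregated into the messages of its children, every rumor in its subtree $\calT_v$. Applied to the root, this means that $r$ holds all $n$ rumors at the start of round $\alpha_r$, so the running time of Algorithm~$\algSimAgDTree$ (in steps) is at most $2\alpha_r$, because each round consists of two steps.

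Next I would plug in the two bounds. Since $\height_2(r)=\Depth_2(\calT)$, Lemma~\ref{lem: nlogn height h} gives $\alpha_r \le 2n\,\Depth_2(\calT)$. Lemma~\ref{lem: gamma-heights}, specialized to $\gamma=2$, gives $\Depth_2(\calT) \le \log_2 n$. Chaining these yields $\alpha_r \le 2n\log_2 n$, and therefore the total running time is at most $4n\log_2 n = O(n\log n)$.

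Finally, I would address the bookkeeping: the preprocessing phase described at the start of Section~\ref{sec: det algorithms with aggregation}, in which nodes transmit once in label order so that each node learns the labels of its children, contributes only an additive $n$ steps and does not affect the $O(n\log n)$ asymptotic. There is no real obstacle here; the only subtlety worth stating explicitly is the simple observation that ``activation of $r$'' already implies ``all rumors delivered'', which follows from the unbounded-message assumption that each transmitted message carries every rumor accumulated so far.
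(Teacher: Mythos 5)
Your proposal is correct and follows essentially the same route as the paper: combine Lemma~\ref{lem: nlogn height h} (giving $\alpha_r \le 2n\,\Depth_2(\calT)$) with Lemma~\ref{lem: gamma-heights} (giving $\Depth_2(\calT) \le \log_2 n$), and note that activation of $r$ means all rumors have arrived. The extra bookkeeping you mention (two steps per round, the $n$-step preprocessing) is harmless and consistent with the paper's argument.
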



\subsection{An $O(n)$-Time Deterministic Algorithm}
\label{sec: linear deterministic with aggregation}



In this section we show how to improve the running time of information gathering
in trees to linear time, assuming unbounded size messages.
The basic idea is to use strong $k$-selective families to speed up the computation.

Recall that a \emph{strong $k$-selective family}, where $1\le k\le n$,
is a collection 
$F_0,F_1,...,F_{m-1} \subseteq [n]$ of sets such that for any set $X\subseteq [n]$ 
with $|X| \le k$ and any
$x\in X$, there is $j$ for which $F_j\cap X = \braced{x}$.
It is well known that for any $k=1,2,...,n$, there is a strong $k$-selective family
with $m = O(k^2\log n)$ sets~\cite{Erdos_etal_families_85,Clementi_etal_distributed_02}.
Note that in the special case $k=1$ the family consisting of just one set $F_0 = [n]$
is $1$-selective. This corresponds to $\myAll$-transmissions in the previous section.
For $k = \omega(\sqrt{n/\log n})$ we can also improve the $O(k^2\log n)$ bound to $O(n)$
by using the set family corresponding to the $\RoundRobin$ protocol, namely the
$n$ singleton sets $\braced{0}, \braced{1},...,\braced{n-1}$.

In essence, the strong $k$-selective family can be used to speed up information 
dissemination through low-degree nodes. Consider  
the protocol $\Selector{k}$ that works as follows: for any step $t$ and any node $v$, if
$\vlabel(v) \in F_{t\bmod{n}}$ then transmit from $v$, otherwise stay in the receive state. 
Suppose that $w$ is a node with fewer than $k$ children, and that these children
collected the messages from their subtrees by time $t$. Steps $t,t+1,...,t+m-1$
of this protocol use all sets $F_0,F_1,...,F_{m-1}$ 
(although possibly in a different order), so each child of $w$ will make a successful
transmission by time $t+m$, which is faster than time $O(n)$ required by
$\RoundRobin$ if $k = o(\sqrt{n/\log n})$.

To achieve linear time for arbitrary trees, we will interleave the steps of
protocol $\Selector{k}$ with $\RoundRobin$ (to deal with high-degree nodes) and
steps where all active nodes transmit (to deal with long paths).

Below, we fix parameters $\kappa = \ceiling{n^{1/3}}$ and $m = O(\kappa^2\log n)$,
the size of a strong $\kappa$-selective family $F_0,F_1,...,F_{m-1}$. (The choice of
$\kappa$ is somewhat arbitrary; in fact, any $\kappa = \Theta(n^c)$, for $0 < c < \half$, would work.)
Without loss of generality
we can assume that $m \le n$.


\begin{myalgorithm}{$\algLinAgDTree$}
We divide the steps into rounds, where each round $s$ consists of three
consecutive steps $3s$, $3s+1$, and $3s+2$, that we will call the
$\myRR$-step, $\myAll$-step, and $\mySel$-step of round $s$, respectively.

For each node $v$ we define its \emph{activation round}, denoted $\alpha_v$, as follows.
If $v$ is a leaf then $\alpha_v = 0$. For
any other node $v$, $\alpha_v$ is the first round such that before this round starts
$v$ has received all messages from its children.

In each round $s = \alpha_v, \alpha_v+1,...,\alpha_v+m-1$, 
$v$ transmits in the $\myAll$-step of round $s$,
and if $\vlabel(v) \in F_{s\bmod{m}}$ then $v$ also transmits in the $\mySel$-step
of round $s$.
In each round $s = \alpha_v, \alpha_v+1,...,\alpha_v+n-1$, 
if $\vlabel(v) = s\bmod{n}$ then $v$ transmits in the $\myRR$-step of
round $s$.
If $v$ does not transmit according to the above rules then
$v$ stays in the receiving state.
\end{myalgorithm}


\emparagraph{Analysis.}
Similar to Algorithm~$\algSimAgDTree$, in Algorithm~$\algLinAgDTree$ each
node $v$ goes through three stages. We call $v$ \emph{dormant}
in rounds $0,1,...,\alpha_v-1$, \emph{active} in rounds 
$\alpha_v,\alpha_v+1,...,\alpha_v+n-1$, and retired thereafter. We will also refer to $v$
as being \emph{semi-retired} in rounds $\alpha_v+m,\alpha_v+m+1,...,\alpha_v+n-1$ (when it is
still active, but only uses $\myRR$-transmissions).
Assuming that $v$ gets activated in some round, 
since $v$ makes at least one $\myRR$-transmission when it is active, it
will successfully transmit its message to its parent before retiring, and
before its parent gets activated. By straightforward
induction on the depth of $\calT$, this implies that
each node will eventually get activated, proving that
Algorithm~$\algLinAgDTree$ is correct.

By a similar argument, Algorithm~$\algLinAgDTree$ satisfies the following
two invariants in each round:
(i) Any path from a leaf to $r$ consists of a segment of retired nodes,
followed by a segment of active nodes (among the active nodes, the
semi-retired nodes precede those that are not semi-retired), 
which is then followed by a segment of dormant nodes.
(ii) Any dormant node has at least one active descendant.

It remains to show that the running time of Algorithm~$\algLinAgDTree$ is $O(n)$.
The idea of the analysis is to show that 
$\mySel$- and $\myAll$-steps disseminate information very fast, in linear time,
through subtrees where all node degrees are less than $\kappa$. (In fact, this
applies also to nodes with higher degrees, as long as they have fewer than
$\kappa$ active children left.)
The process can stall, however, if all active nodes have parents of degree
larger than $\kappa$. In this case, a complete cycle of $\RoundRobin$ will
transmit the messages from these nodes to their parents. We show, using Lemma~\ref{lem: gamma-heights},
that, since $\kappa =  \ceiling{n^{1/3}}$, such stalling can occur at most the total of $3$ times.
So the overall running time 
will be still $O(n)$. We formalize this argument in the remainder of this sub-section.

Let $\bard = \Depth_\kappa(\calT)$.
From Lemma~\ref{lem: gamma-heights}, we have $\bard\le 3$.
We fix some integer $g\in\braced{0,1,2,3}$, a node $w$ with
$\height_\kappa(w) = g$, and we let $\calY = \calT^{\kappa,g}_w$. Thus
$\calY$ consists of the descendants of $w$ (including $w$ itself) whose 
$\kappa$-height in $\calT$ is exactly $g$, or, equivalently (by Lemma~\ref{lem: subtrees gamma-heights}), the
descendants of $w$ in $\calT^{\kappa,g}$ whose $\kappa$-height in
$\calT^{\kappa,g}$ is equal $0$. (See Figure~\ref{fig: linear tree Y} for illustration.)
Therefore all nodes in $\calY$ have degree strictly smaller than $\kappa$.

We also fix $\bars$ to be the first round when all nodes in
$\calT - \calT^{\kappa,g}$ are active or already retired. In particular, 
for $g=0$ we have $\bars = 0$. Our goal now is to show
that $w$ will get activated in at most $O(n)$ rounds after round $\bars$.

\begin{figure}[ht]
\begin{center}
\includegraphics[width=2.4in]{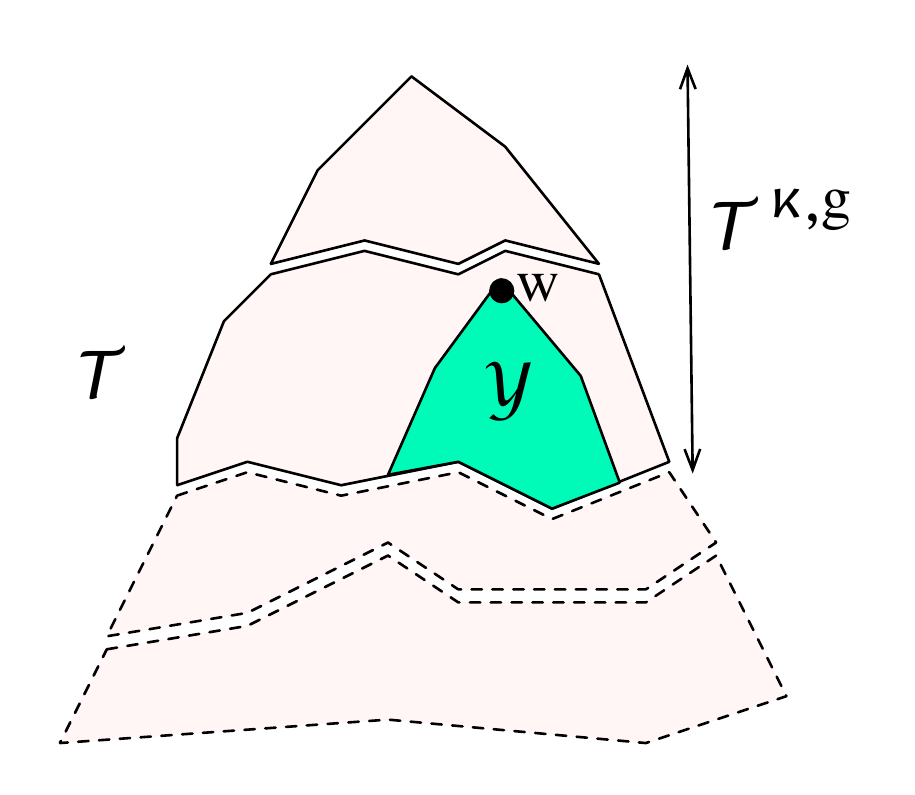}
\ \ \
\includegraphics[width=2.4in]{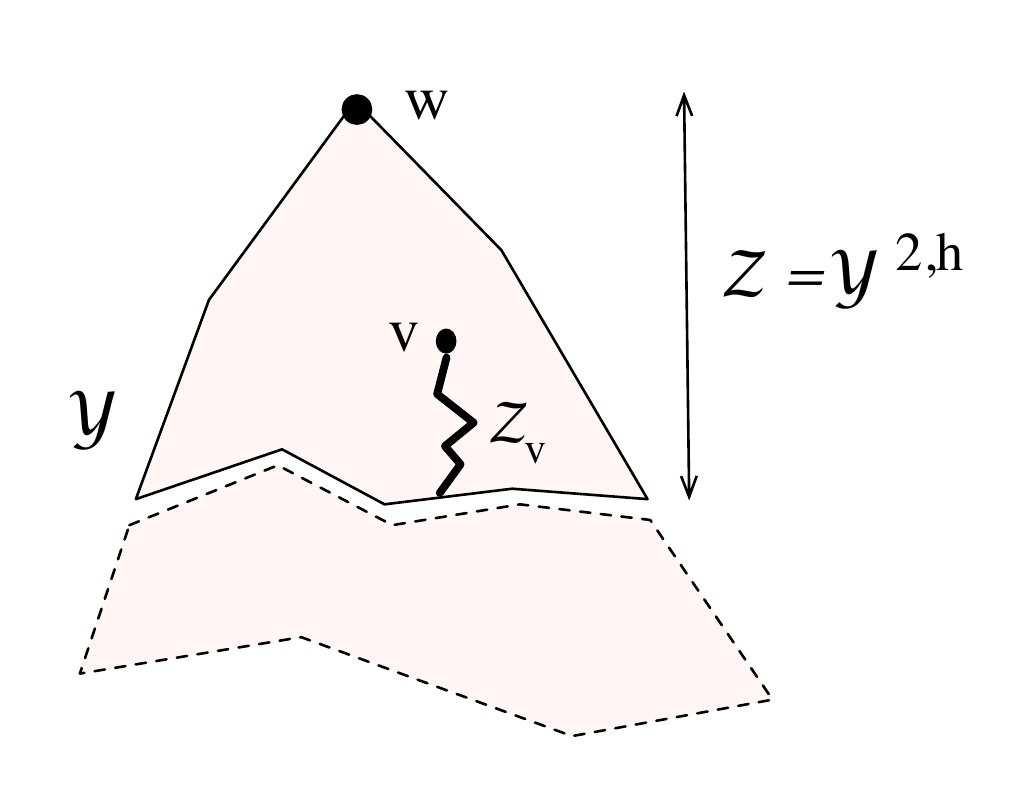}
\caption{On the left, subtree $\calT$, partitioned into layers consisting of nodes with the
same $\kappa$-height. (This figure should not be
interpreted too literally; for example, nodes of $\kappa$-height $3$
may have children of $\kappa$-height $0$ or $1$.)
The figure on the right shows subtree $\calY$ of $\calT$, partitioned
into $\calZ$ and $\calY-\calZ$. Within $\calZ$,
subtree $\calZ_v$ is a path from $v$ to a leaf. 
}
\label{fig: linear tree Y}
\end{center}
\end{figure}

\begin{lemma}\label{lem: lin time in Y}
$\alpha_w \le \bars + O(n)$.
\end{lemma}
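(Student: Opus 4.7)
The plan is to argue that once round $\bars$ has passed, the activation front sweeps up through $\calY$ in $O(n)$ additional rounds by alternating two propagation speeds: a single $\myAll$-step per round along path-like portions of $\calY$, and $O(m)$ rounds per ``branching'' level via the strong $\kappa$-selective family used by the $\mySel$-steps. The structural input is that every $v \in \calY$ has $\height_\kappa(v, \calT^{\kappa,g}) = 0$ by Lemma~\ref{lem: subtrees gamma-heights}, which by the definition of $\kappa$-height forces $v$ to have strictly fewer than $\kappa$ children in $\calY$. This is exactly the condition that makes $\Selector{\kappa}$ deliver each of $v$'s active $\calY$-children to $v$ within $m$ rounds without collisions at $v$.

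The first technical step is to isolate $\calY$ from transmissions originating outside $\calT^{\kappa,g}$. By the definition of $\bars$, every $u \in \calT - \calT^{\kappa,g}$ has $\alpha_u \le \bars$. Since the $\myAll$- and $\mySel$-transmissions at $u$ are confined to the window $[\alpha_u, \alpha_u + m - 1]$ and the $\myRR$-transmissions to $[\alpha_u, \alpha_u + n - 1]$, after round $\bars + m$ no such $u$ contends on any $\myAll$- or $\mySel$-step, and after round $\bars + n$ such a $u$ is entirely silent. The $\myRR$-step always carries at most one transmitter, so there are no collisions on $\myRR$-steps; hence each $u$ delivers its aggregated message to its parent within $n$ rounds of $\alpha_u$, and consequently every leaf of $\calY$ is activated by round $\bars + n$.

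I would then prove by induction on $h = \height_2(v, \calY)$ that
\[
\alpha_v \;\le\; \bars + n + \mathrm{dep}(v) + c\cdot h\cdot m,
\]
where $\mathrm{dep}(v)$ denotes the length of the longest leaf-to-$v$ path inside $\calY$ and $c$ is an absolute constant. The base case $h = 0$ uses Lemma~\ref{lem: subtrees gamma-heights} to conclude that $\calY_v$ is a single path; once its leaf activates, consecutive $\myAll$-steps move the front one hop up per round, with the interference argument above ensuring no collision. The inductive step splits into the two cases from the definition of $\height_2$: if at least two $\calY$-children of $v$ have $2$-height $h-1$, the $\mySel$-steps absorb all of $v$'s $\calY$-children within $m$ rounds of the last activation, using the strong $\kappa$-selective property together with the fact that $v$ has fewer than $\kappa$ children in $\calY$; if only one $\calY$-child $u^*$ has $\height_2(u^*,\calY) = h$ while all other $\calY$-children have $2$-height $< h$, we continue $\myAll$-propagation through $u^*$, observing that the remaining children were activated strictly earlier and are already semi-retired by the time $u^*$ transmits, so they do not collide on the $\myAll$-step. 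Applying this to $v = w$ then yields
\[
\alpha_w \;\le\; \bars + n + |\calY| + c\cdot m\cdot \log_2 n \;=\; \bars + O(n),
\]
since $\height_2(w,\calY) \le \log_2 n$ by Lemma~\ref{lem: gamma-heights} and $m\log n = O(\kappa^2 \log^2 n) = O(n^{2/3}\log^2 n) = o(n)$ for our choice $\kappa = \ceiling{n^{1/3}}$.

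The main obstacle will be the last observation of the inductive step: one has to verify quantitatively that when $v$ has $\calY$-children of several different $2$-heights, the lower-height children finish their $m$-round $\myAll$/$\mySel$ window \emph{before} the deepest child's $\myAll$-transmission would otherwise collide with them. This likely requires strengthening the inductive invariant to track the \emph{end} of each node's transmission window, not only its start, and may force a small constant-factor slack (the constant $c$ above) to accommodate an explicit $m$-round buffer between activations of siblings at different $2$-heights. Once this buffer is in place, the bound is a routine consequence of unwinding the recursion.
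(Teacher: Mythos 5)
Your overall strategy is the same as the paper's: isolate $\calY$ after round $\bars$ (nodes outside $\calT^{\kappa,g}$ stop $\myAll$/$\mySel$ activity by round $\bars+m$ and all activity by $\bars+n$, delivering to their parents via the collision-free $\myRR$ slot), then induct on $2$-height inside $\calY$, advancing one hop per $\myAll$-step along the paths of constant $2$-height and paying $O(m)$ per level for the $\kappa$-selector, with $m\cdot O(\log n)=o(n)$. The gap is exactly where you flag ``the main obstacle,'' and it is not a routine afterthought. First, the inductive bound $\alpha_v\le \bars+n+\mathrm{dep}(v)+c\,h\,m$ is false as stated. A $\calY$-child $u$ of the bottom node $v_1$ of a level-$h$ path may activate while $v_1$ still has $\kappa$ or more active children from outside $\calY$ (this can happen any time before roughly round $\bars+n$); then none of $u$'s $\mySel$- or $\myAll$-transmissions is guaranteed to reach $v_1$, and $u$ is only guaranteed its single $\myRR$ transmission, which may occur as late as $\alpha_u+n-1\approx \bars+2n$ --- beyond your bound whenever $\mathrm{dep}(v_1)+c\,h\,m<n$. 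Similarly, at $h=0$ a path node activated before round $\bars+m$ can have all of its $\myAll$-transmissions blocked by still-active off-path siblings and must fall back on $\myRR$, and your $h=0$ bound carries no $m$-slack to absorb this. This is why the paper's invariant (Claim~A) is $\alpha_v\le \bars+2n+\sum_{i\le h} l_i+hm$, with $2n$ rather than $n$.

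Second, the synchronization you defer is the substance of the inductive step. The paper resolves it by classifying the children of $v_1$ into three types: children outside $\calY$ (activated by $\bars$, retired and delivered via $\myRR$ by $\bars+n$), children in $\calY$ activated before round $\bars+n$ (delivered via $\myRR$ by $\bars+2n$), and children in $\calY$ activated in $[\bars+n,s_{h-1}]$, whose entire $m$-round $\mySel$ window lies in a period when all non-$\calY$ children of $v_1$ are retired, so fewer than $\kappa$ children of $v_1$ are active and the strong $\kappa$-selective family guarantees delivery by $s_{h-1}+m$. The same bookkeeping is what justifies your unproved claim that when the $\myAll$-propagation along the level-$h$ path begins (at round $s_{h-1}+m$), every off-path child is retired or semi-retired and hence silent on $\myAll$-steps; it is not automatic that lower-height siblings ``were activated strictly earlier and are already semi-retired.'' Your use of $\mathrm{dep}(v)$ in place of the paper's $\sum_{i\le h} l_i$ is a harmless cosmetic change (both amortize to $O(n)$), but without the $\bars+2n$ base and the per-level $+m$ buffer, carried through the type-2/type-3 distinction, the induction does not close.
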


\begin{proof}
Let $d = \Depth_2(\calY)$. By Lemma~\ref{lem: gamma-heights},
$d = O(\log|\calY|) = O(\log n)$. 
For $h = 0,...,d$, let $l_h$ be the number of nodes $u\in \calY$
with $\height_2(u,\calY) = h$.
The overall idea of the proof is similar to the analysis of Algorithm~$\algSimAgDTree$.
The difference is that now, since all degrees in $\calY$ are less than $\kappa$,
the number of rounds required to advance through the $h$-th layer of $\calY$, consisting of nodes of
$2$-height equal $h$, can be bounded by $O(m + l_h)$, while before this bound
was $O(n)$. Adding up the bounds for all layers,
all terms $O(l_h)$ will now amortize to $O(n)$,
and the terms $O(m)$ will add up to $O(md)= O(n^{2/3}\log^2n) = O(n)$ as well.
We now fill in the details.

\smallskip
\emparagraph{Claim~A:} 
Let $v$ be a node in $\calY$ with $\height_2(v,\calY) = h$. Then
the activation round of $v$ satisfies
$\alpha_v \le s_h$, where $s_h = \bars + 2n + \sum_{i\le h} l_i + hm$.

\smallskip

First, we observe that Claim~A implies the lemma. This is because for $v=w$ we
get the bound
$\alpha_w \le \bars + 2n + \sum_{i\le d} l_i + dm
			\le \bars + 2n + n + O(\log n)\cdot O(n^{2/3}\log n) = \bars + O(n)$,
as needed.

Thus, to complete the proof, it remains to justify Claim~A. We proceed by
induction on $h$.

Consider first the base case, when $h=0$. We focus on the computation in
the subtree $\calY_v$, which (for $h=0$) is simply a path $v_1,v_2,...,v_q = v$, from a leaf $v_1$
of $\calY$ to $v$. In round $\bars+n$ all the nodes in $\calT-\calY$ must be
already retired. If $v$ is active in round $\bars+n$, we are done, because $\bars+n\le s_0$.
If $v$ is dormant, at least one node in $\calY_v$ must be active (see the invariant
(ii)), so choose $p$
to be the maximum index for which $v_p$ is active. Since
we have no interference from outside $\calY_v$, using a simple inductive
argument, $v$ will be activated in $q-p$ rounds using $\myAll$-transmissions.
Also, $q-p\le l_0$, and therefore 
$\alpha_v \le \bars+n +q-p \le \bars +2n + l_0 = s_0$, which is the bound from 
Claim~A for $h=0$.

In the inductive step, fix some $h>0$, and assume that Claim~A holds for $h-1$.
Denoting $\calZ = \calY^{2,h}$, we consider the computation in 
$\calZ_v$, the subtree of $\calZ$ rooted at $v$. 
(See Figure~\ref{fig: linear tree Y}.)
$\calZ_v$ is a path $v_1,v_2,...,v_q = v$ from a leaf $v_1$ of $\calZ$ to $v$.

The argument is similar to the base case.
There are two twists, however. One, we need to show that $v_1$ will get
activated no later than at time $s_{h-1}+m$; that is, after delay of only $m$,
not $O(n)$. Two,
the children of the nodes on $\calZ_v$ that are not on $\calZ_v$
are not guaranteed to be retired anymore. However, they are semi-retired, 
which is good enough for our purpose.

Consider $v_1$. We want to show first that $v_1$ will get activated
no later than at time $s_{h-1}+m$. All children of $v_1$ can be grouped into
three types. The first type consists of the children of $v_1$ in
$\calT-\calY$. These are activated no later than in round $\bars$, so
they are retired no later than in round $\bars+n$.
All other children of $v_1$ are in $\calY-\calZ$. Among those,
the type-2 children are those that were activated before round $\bars+n$,
and the type-3 children are those that were activated at or after round $\bars+n$.
Clearly, $v_1$ will receive the messages from its children of type~1 and~2,
using $\myRR$-transmissions, no later than in round $\bars+2n$.
The children of $v_1$ of type~3 activate no earlier than in round $\bars+n$.
Also, since they are in $\calY-\calZ$, their $2$-height in $\calY$ is strictly less
than $h$, so they activate no later than in round $s_{h-1}$, by induction. 
(Note that $s_{h-1}\ge \bars+2n$.)
Thus each child $u$ of $v_1$ in $\calY$ of type~3 will complete all its
$\mySel$-transmissions, that include the complete $\kappa$-selector,
between rounds $\bars+n$ and $s_{h-1}+m-1$ (inclusive).
In these rounds all children of $v_1$ that are not in $\calY$ are retired,
so fewer than $\kappa$ children of $v_1$ are active in these rounds.
This implies that the message of $u$ will be received by $v_1$.
Putting it all together, $v_1$ will receive messages from all its children
before round $s_{h-1}+m$, and thus it will be activated no later than in round
$s_{h-1}+m$.

From the paragraph above, we obtain that in round $s_{h-1}+m$
either there is an active node in $\calZ_v$ or all nodes in
$\calZ_v$ are already retired. The remainder of the argument is
similar to the base case.
If $v$ itself is active or retired in round $s_{h-1}+m$ then we are done,
because $s_{h-1}+m\le s_h$.
So suppose that $v$ is still dormant in round $s_{h-1}+m$. 
Choose $p$ to be the largest index for which $v_p$ is active in this round. 
All children of the nodes on $\calZ_v$ that are not on $\calZ_v$ are
either retired or semi-retired. Therefore, since
there is no interference, $v$ will get activated in $q-p$
additional rounds using $\myAll$-transmissions. So 
$\alpha_v \le s_{h-1}+m +q-p 
			\le s_{h-1}+m+ l_h  = s_h$, 
completing the inductive step, the proof of Claim~A, and the lemma.
\end{proof}

From Lemma~\ref{lem: lin time in Y}, all nodes in $\calT$ with $\kappa$-height
equal $0$ will get activated in at most $O(n)$ rounds. For $g=1,2,3$, all
nodes with $\kappa$-height equal $g$ will activate no later than $O(n)$ rounds
after the last node with $\kappa$-height less than $g$ is activated.
This implies that all nodes in $\calT$ will be activated within $O(n)$
rounds. Summarizing, we obtain the main result of this section.


\begin{theorem}\label{thm: DetTree2 time O(n)}
	For any tree with $n$ nodes and any assignment of labels,
	Algorithm~$\algLinAgDTree$ completes information gathering in time $O(n)$.
\end{theorem}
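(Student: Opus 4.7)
The plan is to telescope Lemma~\ref{lem: lin time in Y} across the $\kappa$-layers of $\calT$. First I would observe that with $\kappa = \ceiling{n^{1/3}}$, Lemma~\ref{lem: gamma-heights} gives $\Depth_\kappa(\calT) \le \log_\kappa n \le 3$ for all but small $n$ (where the theorem is trivial). Hence every node of $\calT$, in particular the root $r$, has $\kappa$-height in $\braced{0,1,2,3}$, so only a constant number of layers need to be traversed.

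Next I would perform an induction on $g \in \braced{0,1,2,3}$ to bound the first round $A_g$ by which every node of $\kappa$-height at most $g$ has become active or retired. For the base case $g = 0$, applying Lemma~\ref{lem: lin time in Y} with $\bars = 0$ to each node $w$ of $\kappa$-height $0$ yields $\alpha_w \le O(n)$, so $A_0 = O(n)$. For the inductive step, by round $A_{g-1}$ every node of $\kappa$-height strictly less than $g$ is at least active, so I may set $\bars = A_{g-1}$ in the lemma. Applied to each $w$ with $\height_\kappa(w) = g$, this gives $\alpha_w \le A_{g-1} + O(n)$. Since activation propagates upward (a node only activates after receiving messages from all of its children, which must therefore already be active), every descendant of $w$ with $\kappa$-height $g$ has activated no later than $\alpha_w$, and hence $A_g \le A_{g-1} + O(n)$. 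Telescoping over at most four layers gives $A_3 = O(n)$. Because $r$ has $\kappa$-height at most $3$, we conclude $\alpha_r \le A_3 = O(n)$, and by the moment $r$ is activated it has received every rumor in $\calT$.

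The step I expect to require the most care is verifying that Lemma~\ref{lem: lin time in Y}, stated for a single $w$, truly applies simultaneously to all nodes of a given $\kappa$-height without the $O(n)$ slacks accumulating. This turns out not to be a real obstacle, because the $O(n)$ bound in the lemma depends only on the global parameters $\bars$ and $n$, not on $w$ or on $|\calY|$; moreover, for fixed $g$ the subtrees $\calY_w$ of different $w$'s are either disjoint or nested along an ancestor chain, so the common choice $\bars = A_{g-1}$ is legitimate and the $O(n)$ window afterward covers every such $w$ in parallel.
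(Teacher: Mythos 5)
Your proposal is correct and follows essentially the same route as the paper: it uses Lemma~\ref{lem: gamma-heights} with $\kappa = \ceiling{n^{1/3}}$ to bound the $\kappa$-depth by $3$, and then telescopes the uniform $O(n)$ bound of Lemma~\ref{lem: lin time in Y} over the constantly many $\kappa$-height layers, exactly as the paper does (your explicit check that the lemma's bound is uniform in $w$ just makes the paper's implicit step precise).
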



\section{Deterministic Algorithms without Aggregation}
\label{sec: deterministic without aggregation}



In this section we consider deterministic information gathering without
aggregation, where each message can contain at most one rumor, plus additional
$O(\log n)$ bits of information. In this model, we give an algorithm with running
time $O(n\log n)$.

To simplify the description of the algorithm, we
will temporarily assume that we are allowed to receive and
transmit at the same time. Later, we will show how to remove this assumption.


\begin{myalgorithm}{$\algNoAgDTree$}
First, we do some preprocessing, using a modification of
Algorithm~$\algLinAgDTree$ to compute the $2$-height of each node $v$.
In this modified algorithm, the message from each node contains its $2$-height.
When $v$ receives such messages from its children, it can compute its own $2$-height,
which it can then transmit to its parent.

Let $\ell = \ceiling{\log n}$. We divide the computation into $\ell+1$ phases.
Phase $h$, for $h = 0,1,...,\ell$, consists of steps $3nh, 3nh+1,...,3n(h+1)-1$. 
In phase $h$, only the nodes of $2$-height equal $h$ participate in the computation.
Specifically, consider a node $v$ with $\height_2(v) = h$. We have two stages:
\begin{description}
	\item{Stage~$\myAll$:} In each step $t = 3nh,3nh+1,...,3nh+2n-1$, 
		if $v$ contains any rumor $\rho_u$ that it still has not transmitted, 
		$v$ transmits $\rho_u$.
	\item{Stage~$\myRR$:} In each step $t = 3nh+2n+u$, for $u = 0,1,...,n-1$, if
		$v$ has rumor $\rho_u$, then $v$ transmits $\rho_u$.
\end{description}
In any other step, $v$ is in the receiving state.
\end{myalgorithm}


\emparagraph{Analysis.}
By Lemma~\ref{lem: gamma-heights}, the number of phases is $O(\log n)$.
Therefore Algorithm~$\algNoAgDTree$ makes $O(n\log n)$ steps. It remains
to show that $r$ will receive all rumors.

We claim that,
at the beginning of phase $h$, every node $v$ has rumors from all its descendants 
in $\calT - \calT^{2,h}$, namely the descendants
whose $2$-height is strictly smaller than $h$. (In particular, 
if $\height_2(v)<h$ then $v$ has all rumors from $T_v$.) This is trivially true at the
beginning, when $h=0$.

Assume that the claim holds for some $h < \ell$, and consider phase $h$. We want to
show that each node $v$ has rumors from all descendants in $\calT - \calT^{2,h+1}$.
By the inductive assumption, $v$ 
has all rumors from its descendants in $\calT - \calT^{2,h}$.
So if $v$ does not have any descendants of height $h$ (in particular, if
$\height_2(v)\le h-1$) then we are done.

It thus remains to prove that if $v$ has a child $u$ with $\height_2(u) = h$ then
right after phase $h$ all rumors from $\calT_u$ will also be in $v$. (Of course,
this case applies only if $\height_2(v)\ge h$.)

The subtree $\calT^{2,h}_u$, namely the subtree consisting of the descendants of $u$ with 
$2$-height equal $h$,
is a path  $\calP = u_1,u_2,...,u_q = u$, where $u_1$ is a leaf of $\calT^{2,h}$. 
We show that, thanks to pipelining, all rumors that are in $\calP$ when phase $h$ starts
will reach $u$ during Stage~$\myAll$.

In phase $h$, all children of the nodes in $\calP$ that are not on $\calP$ do not transmit.
For any step $3nh+s$, $s =0,1,...,2n-1$, and 
for $i = 1,2,...,q-1$, we define $\phi_{s,i}$ to be the number of rumors in
$u_{i}$ that are still not transmitted, and we let
$\Phi_{s} = \sum_{i=a_s}^{q-1}\max(\phi_{s,i},1)$, where
$a_s$ is the smallest index for which $\phi_{s,a_s}\neq 0$.
We claim that as long as $\Phi_s>0$, its value will decrease in step $s$.
Indeed, for $i<q$,
each node $v_i$ with $\phi_{s,i} > 0$ will transmit a new rumor to
$v_{i+1}$. Since $\phi_{s,i} = 0$ for $i < a_s$, 
node $u_{a_s}$ will not receive any new rumors.
We have $\phi_{a_s} > 0$, by the choice of $a_s$. If
$\phi_{a_s} >1$ then $\max(\phi_{s,a_s},1)$ will decrease by $1$.
If $\phi_{s,a_s} = 1$ then the index $a_s$ itself will increase. In either
case, $u_{a_s}$'s contribution to $\Phi_{s}$ will decrease by $1$.
For $i > a_s$, the term $\max(\phi_{s,i},1)$ cannot increase, because
if $\phi_{s,i} > 0$ then $u_{i}$ transmits a new rumor
to $u_{i+1}$, and if $\phi_{s,i} = 0$ then this term is $1$ anyway.
Therefore, overall, $\Phi_{s}$ will decrease by at least $1$.

Since $\Phi_{s}$ strictly decreases in each step, and its initial value is
at most $q + n \le 2n$, $\Phi_{s}$ will become $0$
in at most $2n$ steps. In other words, in $2n$ steps $u$ will
receive all rumors from $\calP$, and thus all rumors from $\calT_u$.

In Stage~$\myRR$, $u$ will transmit all collected rumors to
$v$, without collisions.
As a result, at the beginning of the next phase $v$ will contain
all rumors from $\calT_u$, completing the proof of the inductive step.

\smallskip
We still need to explain how to modify Algorithm~$\algNoAgDTree$ to
eliminate the assumption that we can transmit and receive at the same
time. This can be accomplished by adding another $3n$ steps to each phase.

We first add a new stage at the beginning of each phase,
consisting of $n$ steps. Note that each node $v$ with 
$\height_2(v) = h$ knows $h$ and also it knows whether it
has a child of $2$-height $h$. If $v$ does not have such a child,
then $v$ is the initial node of a path consisting of nodes of
$2$-height equal $h$. At the very beginning of phase $h$, $v$
sends a message along this path, so that any node on this path
can determine whether it is an even or odd node along this path.

Then we double the number of steps Stage~$\myAll$,
increasing its length from $2n$ to $4n$. In this stage,
among the nodes with $2$-height $h$,
``even'' nodes will transmit in even steps, and ``odd'' nodes
will transmit in odd steps. In this way, each node will
never receive and transmit at the same time.


\begin{theorem}\label{thm: nlog time without aggregation}
For any tree with $n$ nodes and any assignment of labels,
Algorithm~$\algNoAgDTree$ completes information gathering in time $O(n\log n)$.
\end{theorem}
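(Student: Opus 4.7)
The plan is to prove correctness by induction on phases, and then to bound the total running time by the number of phases times the length of each phase. First, by Lemma~\ref{lem: gamma-heights} applied with $\gamma=2$, we have $\Depth_2(\calT)\le\log_2 n$, so only $\ell+1=O(\log n)$ phases are needed, and each phase takes $O(n)$ steps, giving the desired $O(n\log n)$ overall bound. The bulk of the work is in establishing correctness, so I would focus there.

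The key invariant I would prove by induction on $h$ is: at the start of phase $h$, every node $v$ holds the rumors of all its descendants lying in $\calT-\calT^{2,h}$. The base case $h=0$ is trivial. For the inductive step, fix a node $v$ with $\height_2(v)\ge h$ and a child $u$ with $\height_2(u)=h$; we must show that by the end of phase $h$, all rumors of $\calT_u$ have been delivered to $v$. By the inductive hypothesis, $u$ already has the rumors of $\calT_u-\calT^{2,h}_u$ at the start of the phase, and by Lemma~\ref{lem: subtrees gamma-heights} (with $\gamma=2$) together with the definition of $2$-height, $\calT^{2,h}_u$ is a single path $u_1,u_2,\dots,u_q=u$ (it cannot branch, or else $u$ would have $2$-height at least $h+1$). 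So all missing rumors from $\calT_u$ are distributed along this path at the start of phase $h$, and there are at most $n$ of them.

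The main technical obstacle is arguing that Stage~$\myAll$, consisting of $2n$ steps, pipelines all these rumors up to $u_q$ without losing any of them. My approach would be a potential-function argument, mirroring the one in the excerpt: let $\phi_{s,i}$ be the number of rumors sitting at $u_i$ that $u_i$ has not yet transmitted during phase $h$, and define
\[
\Phi_s=\sum_{i=a_s}^{q-1}\max(\phi_{s,i},1),
\]
where $a_s$ is the smallest index with $\phi_{s,a_s}>0$. I would verify that as long as $\Phi_s>0$, each active $u_i$ transmits a fresh rumor forward; crucially, the children of $u_i$ off the path are inactive in phase $h$ (they have $2$-height strictly less than $h$ and finished in earlier phases), so no collisions disrupt the pipeline. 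A careful case analysis shows $\Phi_{s+1}\le\Phi_s-1$: the leading node $u_{a_s}$ either decreases its own contribution or allows $a_s$ to advance, while every later $u_i$ preserves its contribution of $\max(\phi_{s,i},1)\ge 1$. Since $\Phi_0\le q+n\le 2n$, the potential drops to zero within Stage~$\myAll$, meaning $u$ holds all rumors of $\calT_u$. Stage~$\myRR$ then lets $u$ transmit (its $n$ accumulated rumors) to $v$ in collision-free $\RoundRobin$ order, completing the inductive step.

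Finally, I would address the spurious assumption that a node may simultaneously transmit and receive. This is handled by prepending a short $n$-step initialization within each phase in which the top node of each length-$q$ path broadcasts a token along the path, letting each $u_i$ learn its parity, and then doubling the length of Stage~$\myAll$ to $4n$ so that even-indexed and odd-indexed nodes transmit in alternating steps. This only multiplies the per-phase length by a constant, preserving the $O(n\log n)$ total running time, which finishes the proof.
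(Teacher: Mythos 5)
Your proposal follows essentially the same argument as the paper: the same inductive invariant that at the start of phase $h$ every node already holds the rumors of all its descendants in $\calT-\calT^{2,h}$, the same observation that $\calT^{2,h}_u$ is a path, the same pipelining potential $\Phi_s=\sum_{i\ge a_s}\max(\phi_{s,i},1)$ dropping by one per step of Stage~$\myAll$, the same collision-free Stage~$\myRR$ handoff to the parent, and the same parity trick (an $n$-step token pass plus doubling Stage~$\myAll$) to remove the simultaneous transmit/receive assumption. The only slip is that the parity token must be launched from the leaf end of each path (the node with no child of $2$-height $h$), not the end nearest the root, since edges are directed toward the root --- which is exactly how the paper does it and does not affect the argument.
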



\section{Deterministic Fire-and-Forward Protocols}
\label{sec: det fireandforward protocols}



We now consider a very simple type of protocols that we call \emph{fire-and-forward} protocols.
For convenience, in this model we allow nodes to receive and transmit messages at the same step. 
(Later we will explain how this condition can be removed.) In 
a fire-and-forward protocol, at any time $t$, any node $v$ can either be idle or
make one of two types of transmissions:
\begin{description}
	\item{\emph{Fire}:} $v$ can transmit its own rumor, or
	\item{\emph{Forward}:} $v$ can transmit the rumor received in step $t-1$, if any.
\end{description} 
In Section~\ref{sec: nlogn randomized} we show that there exists a randomized
fire-and-forward protocol that accomplishes information gathering in time $O(n\log n)$.
This raises the question whether this running time can be achieved by a
deterministic fire-and-forward protocol. 
(As before, in the deterministic case we assume that the nodes are labelled $0,1,...,n-1$.)
There is a trivial deterministic fire-and-forward protocol with running time $O(n^2)$: release
all rumors one at a time, spaced at intervals of length $n$. In this section we show that
this can be improved to $O(n^{1.5})$ and that this bound is optimal.

When a node does not forward the rumor received in the previous step, we say that
this rumor is \emph{dropped}. Note that we allow a node to drop a received rumor even
if it does not fire. We will extend the definition of collision to include the
situation when a node attempts to fire right after receiving a rumor (in which case
nothing will be transmitted).

Of course, fire-and-forward protocols use only bounded messages. The key property of
fire-and-forward protocols is that any rumor, once fired,
moves up the tree one hop per step, unless either it collides, or is dropped,
or it reaches the root. 

If rumors fired from two nodes collide at all, they will collide at
their lowest common ancestor. This happens only when the
difference in times between these two firings is equal to the difference
of their depths in the tree. More precisely, let $\calT$ be the tree on input,
denote by $\depth(v)$ the depth of a node $v$ in $\calT$, and suppose that some node
$v$ fires its rumor at time $t$. Then this rumor will reach the root if
no other node $u$ fires at time $t+\depth(v) - \depth(u)$.

The fire-and-forward protocol we develop in this section is \emph{oblivious}, in the sense that
(i) the decision whether to fire or not depends only on the label of the node and the
current time, and (ii) if a node received a rumor in the previous step and it
is supposed to fire at the current step, then no message is transmitted. (This can be
thought of as extending the definition of collisions.)
In fact, it is not hard
to see that any protocol can be turned into an oblivious one without affecting its
asymptotic running time. The idea is that leaves of the tree receive no information at all
during the computation. For any fire-and-forward protocol $\calA$ that runs in time $f(n)$, and for
any tree $\calT$, imagine that we
run this protocol on the tree $\calT'$ obtained by adding a leaf to any
node $v$ and giving it the label of $v$. Label the original nodes with the remaining labels.
This at most
doubles the number of nodes, so $\calA$ will complete in time $O(f(n))$ on $\calT'$.
(We tacitly assume here that $f(cn) = \Theta(n)$ for any constant $c$, which is true for the
time bounds we consider.)
In the execution of $\calA$ on $\calT'$ the leaves receive no information and all rumors
from the leaves will reach the root. This implies that
if we apply $\calA$ on $\calT$ and ignore all information received during the computation,
the rumors will also reach the root. In other words, after this modification, we obtain
an oblivious protocol $\calA'$ with running time $O(f(n))$.


\subsection{An $O(n^{1.5})$ Upper Bound}

We now present our $O(n^{1.5})$-time fire-and-forward protocol. As explained earlier, this protocol 
should specify a set of firing times for each label, so that for any mapping $[n]\to [n]$, 
that maps each label to the depth of the node with this label,
each node will have at least one firing time for which there will not be a collision
along the path to the root. We want each of these firing times to be at most $O(n^{1.5})$.
To this end, we will partition all labels into batches, each of size 
roughly $\sqrt{n}$, and show that for any batch we can define such collision-avoiding
firing times from an interval of length $O(n)$.
Since we have about $\sqrt{n}$ batches, this will give us running time $O(n^{1.5})$.

Our construction of firing times is based on a concept of dispersers, defined below,
which are reminiscent of various \emph{rulers} studied in number theory,
including Sidon sequences~\cite{wikipedia_sidon},
Golomb rulers~\cite{wikipedia_golomb}, or sparse rulers~\cite{wikipedia_sparse}. 
The particular construction we give in the paper is, in a sense, a
multiple set extension of a Sidon-set construction 
by Erd{\"o}s and Tur{\'a}n~\cite{Erdos_Turan_41}.

We now give the details. For $z\in\integers$ and $X\subseteq\integers$, let
$X+z = \braced{x+z\suchthat x\in X}$. Let also $s$ be a positive integer.
A set family $D_1,...,D_m \subseteq [s]$ is called an \emph{$(n,m,s)$-disperser} if 
for each function $\delta: \braced{1,...,m}\to [n]$ and
each $j$ we have $D_j + \delta(j) \not\subseteq \bigcup_{i\neq j} (D_i + \delta(i))$.
The intuition is that $D_j$ represents the set of firing times of node $j$ and
 $\delta(j)$ represents $j$'s depth in the tree.
Then the disperser condition says that some firing in $D_j$ will not collide
with firings of other nodes.


\begin{lemma}\label{lem: disperser}
There exists an $(n,m,s)$-disperser with $m = \Omega(\sqrt{n})$ and $s = O(n)$.
\end{lemma}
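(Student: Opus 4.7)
My plan is to give an explicit algebraic construction, generalizing the classical Erd\"os--Tur\'an Sidon set to a family of $m$ sets. Pick a prime $p$ with $p = \Theta(\sqrt{n})$ (its existence follows from Bertrand's postulate), and set $m = \lfloor p/2 \rfloor$ and $s = 2p^2$, so that $m = \Theta(\sqrt{n})$ and $s = O(n)$. For each $j \in \braced{0,1,\ldots,m-1}$, define
\[
    D_j \;=\; \braced{ap + (ja \bmod p) \suchthat a = 0,1,\ldots,2p-1}.
\]
Since $ap + (ja \bmod p)$ lies in the interval $[ap,(a+1)p)$, the $2p$ values corresponding to distinct $a$'s are distinct, so $|D_j|=2p$ and $D_j \subseteq [s]$.

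The disperser property will follow from the claim that for any $j$, any shift function $\delta$, and any $i\neq j$ in $\braced{0,\ldots,m-1}$, the intersection satisfies $|(D_j+\delta(j)) \cap (D_i+\delta(i))| \le 4$. Once this is shown, summing over all $i\neq j$ yields $|(D_j+\delta(j)) \cap \bigcup_{i\neq j}(D_i+\delta(i))| \le 4(m-1) \le 2p-4 < 2p = |D_j|$, so at least one element of $D_j+\delta(j)$ is uncovered, establishing the disperser property.

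To prove the intersection bound, note that any element of $(D_j+\delta(j)) \cap (D_i+\delta(i))$ corresponds to a pair $(a,a') \in [0,2p)^2$ satisfying
\[
    ap + (ja \bmod p) + \delta(j) \;=\; a'p + (ia' \bmod p) + \delta(i).
\]
Writing $d = a-a'$ and $D = \delta(i)-\delta(j)$, this rearranges to $dp = D - \bigl[(ja \bmod p) - (ia' \bmod p)\bigr]$. The bracketed difference lies in the open interval $(-p,p)$, forcing the integer $d$ to lie in the open interval $(D/p - 1,\, D/p + 1)$; such an interval contains at most $2$ integers. For each admissible $d$, reducing the equation modulo $p$ gives $(j-i)\,a + i\,d \equiv 0 \pmod p$, and since $|j-i| < p/2$ is nonzero modulo $p$, this pins down $a \bmod p$ uniquely. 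Hence at most $2$ values of $a$ in $[0,2p)$ are possible for each $d$, giving at most $2\cdot 2 = 4$ pairs $(a,a')$.

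The main obstacle is the bookkeeping in the last step: keeping track of the modular arithmetic, verifying that the restriction $j \in [0,p/2)$ really does guarantee that $j-i$ is invertible modulo $p$ for every $i\neq j$ in the range, and correctly counting the at most two integer solutions of $dp \in (D-p,\,D+p)$. Once this intersection bound is in hand, the counting argument above closes out the proof.
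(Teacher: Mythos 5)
Your construction is correct, but it takes a genuinely different route from the paper's. The paper generalizes the Erd\H{o}s--Tur\'an Sidon-set construction via a quadratic map: it sets $d_a(x) = (ax \bmod p) + 2p\,(ax^2 \bmod p)$, takes $D_i = \{d_i(x) : x \in [p]\}$ with $m = (p-1)/2$ and $s = 2p^2 + p$, and shows that for $a \neq b$ the equation $d_a(x) - d_b(y) = t$ has at most two solutions $(x,y)$ (by splitting the identity into its mod-$p$ ``digits'' and solving a linear--quadratic system in the field $\integers_p$); hence each $i \neq j$ kills at most two of the $p$ firing positions of $D_j$, and $2(m-1) = p-3 < p$ leaves a survivor. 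You instead use the purely linear family $D_j = \{ap + (ja \bmod p)\}$ of size $2p$, obtain a pairwise bound of $4$ on shifted intersections from a carry/interval argument (at most two admissible values of $d = a - a'$) combined with invertibility of $j - i$ modulo $p$, and close with the same union-bound schema $4(m-1) < 2p$. Both proofs share the skeleton ``shifted sets intersect pairwise in $O(1)$ points, then count,'' but yours trades the field-theoretic two-solutions claim for an elementary digit-carry analysis, at the cost of a weaker intersection constant ($4$ vs.\ $2$) and larger sets ($2p$ vs.\ $p$ firings per node) --- immaterial for the stated $m = \Omega(\sqrt n)$, $s = O(n)$, and for the $O(n^{1.5})$ algorithm. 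One small slip: reducing your equation modulo $p$ gives $(j-i)a \equiv D - id \pmod{p}$, not $(j-i)a + id \equiv 0 \pmod{p}$ (you dropped the shift term $D$); since the congruence is still linear in $a$ with invertible coefficient $j-i$, the value $a \bmod p$ remains uniquely determined for each $d$, so the count of at most $2\cdot 2 = 4$ pairs, and the rest of your argument, stands.
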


\begin{proof}
Let $p$ be the smallest prime such that $p^2 \ge n$. 
For each $a = 1,2,...,p-1$ and $x \in [p]$ define
\begin{equation*}
	d_a(x) = (ax \bmod{p}) + 2p\cdot(ax^2 \bmod{p}).
\end{equation*}
We claim that for any $a\neq b$ and any $t\in\integers$ the equation 
$d_a(x) - d_b(y) = t$ has at most two solutions $(x,y)\in [p]^2$.
For the proof, fix $a,b,t$ and one solution $(x,y)\in [p]^2$. Suppose that
$(u,v)\in [p]^2$ is a different solution. Thus we have
$d_a(x) - d_b(y) = d_a(u) - d_b(v)$. After substituting and 
rearranging, this can be written as
\begin{align*}
(ax\bmod{p}) - (by\bmod{p}) &- (au\bmod{p}) + (bv \bmod{p})
\\
			&= 2p[\,-(ax^2\bmod{p}) + (by^2\bmod{p}) + (au^2\bmod{p}) - (bv^2\bmod{p}) \,].
\end{align*}
The expression on the left-hand side is strictly between $-2p$ and $2p$, so
both sides must be equal $0$. This implies that
\begin{align}
ax-au \,&\equiv\, by-bv \pmod{p}
				\quad\textrm{and}
				\label{eqn: n^1.5 first eq}
		\\
ax^2 - au^2 \,&\equiv\, by^2-bv^2 \pmod{p}.
				\label{eqn: n^1.5 second eq}
\end{align}
From equation~(\ref{eqn: n^1.5 first eq}), the assumption that
$(x,y)\neq (u,v)$ implies that $x\neq u$ and $y\neq v$.
We can then divide the two equations, getting
\begin{equation}
x+u \equiv y+v \pmod{p}.
				\label{eqn: n^1.5 third eq}
\end{equation}
With addition and multiplication modulo $p$, $\integers_p$ is a field. Therefore 
for any $x$ and $y$, and any $a \neq b$, equations (\ref{eqn: n^1.5 first eq}) and (\ref{eqn: n^1.5 third eq})
uniquely determine $u$ and $v$, completing the proof of the claim.

\smallskip

Now, let $m = (p-1)/2$ and $s = 2p^2+p$. 
By Bertrand's postulate we have $\sqrt{n}\le p < 2\sqrt{n}$, which implies
that $m = \Omega(\sqrt{n})$ and $s = O(n)$.
For each $i = 1,2,...,m$, define $D_i = \braced{ d_i(x) \suchthat x\in[p]}$. 
It is sufficient to show that the sets $D_1,D_2,...,D_m$ satisfy the
condition of the $(n,m,s)$-disperser. 

The definition of the sets $D_i$ implies that $D_i\subseteq [s]$ for each $i$.
Fix some some $\delta$ and $j$ from the definition of dispersers. It remains to
verify that $D_j + \delta(j) \not\subseteq \bigcup_{i\neq j} (D_i + \delta(i))$.
For $x\in [p]$ and $i\in\braced{1,2,...,m}$, we say that $i$ \emph{kills} $x$
if $d_j(x) + \delta(j) \in D_i+\delta(i)$.
Our earlier claim implies that any $i\neq j$ kills at most two values in $[p]$.
Thus all indices $i\neq j$ kill at most $2(m-1) = p-3$ integers in
$[p]$, which implies that there is some $x\in[p]$ that is not killed by 
any $i$. For this $x$, we will have
$d_j(x) + \delta(j) \notin \bigcup_{i\neq j} (D_i + \delta(i))$,
completing the proof that $D_1,...,D_m$ is indeed an $(n,m,s)$-disperser.
\end{proof}

We now describe our algorithm.


\begin{myalgorithm}{$\algMlessDTree$}
Let $D_1,D_2,...,D_m$ be the $(n,m,s)$-disperser from Lemma~\ref{lem: disperser}. 
We partition all labels (and thus also the corresponding nodes) arbitrarily
into batches $B_1,B_2,...,B_l$, for $l = \ceiling{n/m}$, with each batch $B_i$ having $m$
nodes (except the last batch, that could be smaller).
Order the nodes in each batch arbitrarily, for example according to increasing labels.

The algorithm has $l$ phases. Each phase $q = 1,2,...,l$ consists of $s'=s+n$ steps in the time interval
$[s'(q-1), s'q-1]$. In phase $q$, the algorithm transmits rumors from batch $B_q$,
by having the $j$-th node in $B_q$ fire at each time $s'(q-1) + \tau$, for $\tau\in D_j$.
Note that in the last $n$ steps of each phase none of the nodes fires.
\end{myalgorithm}


\emparagraph{Analysis.}
We now show that Algorithm~$\algMlessDTree$ correctly performs gathering in any $n$-node
tree in time $O(n^{1.5})$.  Since $m = \Omega(\sqrt{n})$, we have
$l = O(\sqrt{n})$. Also, $s' = O(n)$, so the total run time of the protocol is $O(n^{1.5})$.

It remains to show that during each phase $q$ each
node in $B_q$ will have at least one firing that will send its rumor to the root $r$ without collisions.
Fix some tree $\calT$ and let $\delta(j)\in [n]$ be the depth of the
$j$th node in batch $B_q$.
For any batch $B_q$ and any $v\in B_q$, if $v$ is the
$j$th node in $B_q$ then $v$ will fire at times $s'(q-1) + \tau$, for $\tau\in D_j$.
From the definition of dispersers, there is
$\tau \in D_j$ such that $\tau + \delta(j) - \delta(i) \notin D_i$ for each $i\neq j$.
This means that the firing of $v$ at time $s'(q-1) + \tau$ will not collide with any
firing of other nodes in batch $B_q$. Since the batches are separated by empty intervals
of length $n$, this firing will not collide with any firing in other batches.
So $v$'s rumor will reach $r$. 

Summarizing, we obtain our main result of this section.


\begin{theorem}\label{thm: n^1.5 fire-and-forward upper bound}
There is a fire-and-forward protocol for information gathering in trees
with running time $O(n^{1.5})$.
\end{theorem}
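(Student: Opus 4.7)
The plan is to verify that Algorithm $\algMlessDTree$ achieves the claimed bound, which reduces to two things: a total step count of $O(n^{1.5})$, and the correctness claim that every rumor reaches $r$ without collision at some firing. Both follow from Lemma \ref{lem: disperser} once the fire-and-forward timing is translated correctly into the disperser framework.

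First I would count steps. Lemma \ref{lem: disperser} gives $m = \Omega(\sqrt{n})$ and $s = O(n)$, so the number of batches is $l = \lceil n/m\rceil = O(\sqrt{n})$ and each phase lasts $s' = s + n = O(n)$ steps. Multiplying, the total runtime is $O(\sqrt{n}) \cdot O(n) = O(n^{1.5})$, as required.

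Next I would establish correctness. Fix an input tree $\calT$ and a phase $q$, and for each $j = 1, \dots, m$, let $\delta(j) \in [n]$ be the depth of the $j$-th node of $B_q$ in $\calT$. The crucial observation, recalled from the preamble to the algorithm, is that a rumor fired from a node of depth $d$ at time $t$ reaches $r$ at time $t + d$, and can only collide (at the lowest common ancestor) with another fired rumor whose ``arrival time at the root'' also equals $t + d$. Thus, if the $j$-th node of $B_q$ fires at time $s'(q-1) + \tau$ with $\tau \in D_j$, the firing succeeds in reaching $r$ provided $\tau + \delta(j) \notin D_i + \delta(i)$ for all $i \neq j$ (where we restrict to $i$ in $B_q$ for the moment). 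The disperser property $D_j + \delta(j) \not\subseteq \bigcup_{i\neq j}(D_i + \delta(i))$ guarantees at least one such $\tau \in D_j$, so the $j$-th node of $B_q$ succeeds.

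Finally I would handle interactions between phases. Every firing in phase $q$ occurs within $[s'(q-1), s'(q-1)+s-1]$, and every in-flight rumor travels at most $n$ hops, so all phase-$q$ transmissions are finished by time $s'(q-1) + s + n - 1 = s'q - 1$; since the next phase starts exactly at time $s'q$, rumors from different batches never interfere. Combining the in-phase correctness, the cross-phase independence, and the runtime bound yields the theorem. The main obstacle is really just the in-phase collision analysis: one has to be careful that the firing schedule $\{s'(q-1) + \tau : \tau \in D_j\}$, together with the tree-induced assignment $\delta$, matches the universally-quantified $\delta$ in the disperser definition, and that a collision at any ancestor (not only at $r$) is ruled out by the same condition on arrival times at the LCA. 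Once this correspondence is spelled out, everything else is bookkeeping.
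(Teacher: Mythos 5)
Your proposal is correct and follows essentially the same route as the paper: bound the runtime by $l\cdot s' = O(\sqrt{n})\cdot O(n)$, use the disperser property of Lemma~\ref{lem: disperser} to find, for each node of the current batch, a firing time $\tau\in D_j$ with $\tau+\delta(j)-\delta(i)\notin D_i$ for all $i\neq j$ (which by the arrival-time/LCA criterion rules out all in-batch collisions), and use the trailing $n$ idle steps of each phase to rule out cross-batch interference. No further comment is needed.
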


It remains to explain how we can extend this result to the model where
nodes are not allowed to receive and transmit at the same time. We only
give a sketch of the argument. The idea is that if some node $w$ transmits a rumor
$\rho_v$ and receives a rumor $\rho_u$, and if $v$ fired at time $t$, then 
$v$ fired at time $t+\depth(v) - \depth(u)+1$. We can then extend the
definition of collisions to include this situation. Incorporating this into
the construction from Lemma~\ref{lem: disperser}, any index $i$ may now
kill more than two $x$'s, but not more than four. So taking $m = \floor{(p-1)/4}$,
we still will always have an $x$ that is not killed by any $i$.
The details will be provided in the full paper.


\subsection{An $\Omega(n^{1.5})$ Lower Bound}

In this section we show that any fire-and-forward protocol needs time $\Omega(n^{1.5})$ to
deliver all rumors to the root for an arbitrary tree. Fix some fire-and-forward protocol
$\calA$. Without loss of generality, as explained earlier in this section, we can
assume that $\calA$ is oblivious, namely that the set $F_v$ of firing times of each
node $v$ is uniquely determined by its label. 

Let $T$ be the running time of $\calA$.
We first give the proof under the assumption that the total number of firings
of $\calA$, among all nodes in $[n]$, 
is at most $T$. Later we will show how to extend our
argument to protocols with an arbitrary number of firings.

We will show that if $T = o(n^{1.5})$ then
$\calA$ will fail even on a ``caterpillar'' tree,
consisting of a path $P$ of length $n$ with $n$ leaves attached to the
nodes of this path. (For convenience we use $2n$ nodes
instead of $n$, but this does not affect the asymptotic lower bound.)
This path $P$ is fixed, and the label assignment to these nodes is not important
for the proof, but, for the ease of reference,
we will label them $n,n+1,...,2n-1$, in order in which they appear on the path,
with node labeled $2n-1$ being the root. 
The leaves have labels from the set $[n] = \braced{0,1,...,n-1}$. 
To simplify the argument we will identify
the labels with nodes, and we will refer to the node with label $l$ simply
as ``node $l$''. (See Figure~\ref{fig: caterpillar}.)
The objective is to show that there is a way to attach
the nodes from $[n]$ to $P$ to make $\calA$ fail, which means that
there is at least one node $w$ whose all firings will collide with
firings from other nodes.

\begin{figure}[ht]
\begin{center}
\includegraphics[width=4.5in]{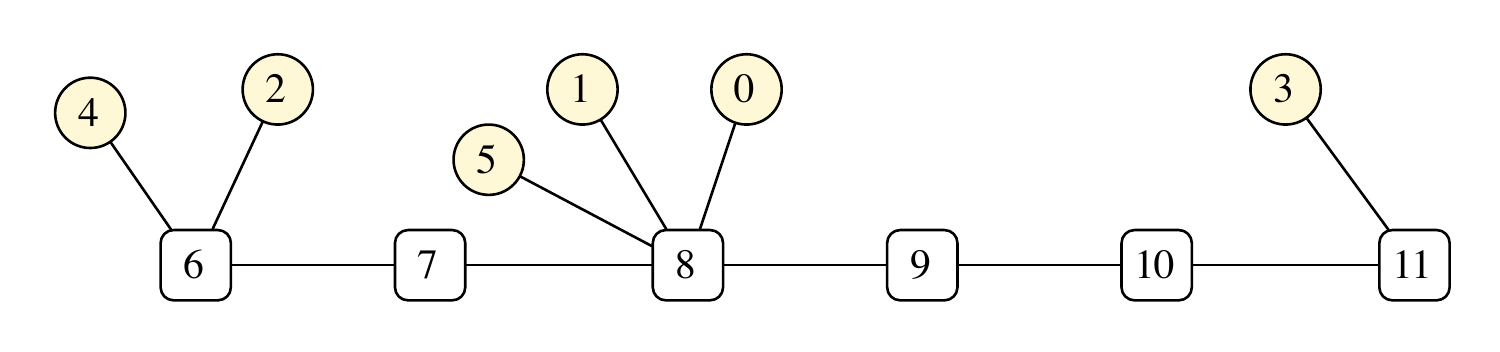}
\caption{A caterpillar graph from the proof, for $n=6$. 
The nodes on the path $P$ are represented by 
rectangles, and the leaves are represented by circles. In this example,
the root is $11$ and $w = 4$.
}
\label{fig: caterpillar}
\end{center}
\end{figure}

Without loss of generality, assume that $T$ is a multiple of $n$.
We let $k = T/n$, and we divide the time range $0,1,...,T-1$ into $k$
bins of size $n$, where the $i$th bin is the interval $[in,(i+1)n-1]$, for $i = 0,1,...,k-1$.
If a node $v\in [n]$ fires at time $t$, we say that a node $u\in [n]-\braced{v}$ 
\emph{covers} this firing if $u$ has a firing at time $t'$ such that $t' > t$ and $t,t'$ are
in the same bin.
For a set $L\subseteq F_v$ of firings of $v$, denote by 
$C(L)$ the set of nodes that cover the firings in $L$.

\paragraph{Claim~A:} For each node $v\in [n]$ there is a set of firings $L_v\subseteq F_v$
such that $|C(L_v)| < |L_v|$.

\smallskip

The proof of Claim~A is by contradiction. Suppose that there exists a node $w\in [n]$
with the property that for each $L\subseteq F_w$ we have $|C(L)|\ge |L|$.
Then Hall's theorem would imply that there is a perfect matching between the firing times 
in $F_w = [n]-\braced{w}$ that cover these firings.
Let the firing times of  $w$ be $F_w = \braced{ t_1, t_2, \dots, t_j}$,
and for each $i = 1,2,...,j$, let $u_i$ be the node matched to $t_i$ in this matching.
By the definitions of bins and covering, each $u_i$
fires at some time $t_i+s_i$, where $0 \leq s_i \leq n-1$.  We can then construct
a caterpillar tree by
attaching $w$ to node $n$ and attaching each $u_i$ to node $n+s_i$ on $P$. 
In this caterpillar tree, the firing of $w$ at each time $t_i$ will collide with
the firing of $u_i$ at time $t_i+s_i$. So the rumor from $w$ will not
reach the root, contradicting the correctness of $\calA$. 
This completes the proof of Claim~A.

\smallskip

For each $v\in [n]$, we now fix a set $L_v$ from Claim~A.
Let $A$ be the set of ordered pairs $(u,v)$ of different nodes $u,v\in [n]$
for which
there is a bin which contains a firing from $L_u$ and a firing from $L_v$,
in this order in time.

We will bound $|A|$ in two different ways. On the one hand, Claim~A implies that 
each $v \in[n]$ appears as the first element in fewer than $|L_v|$ pairs in $A$. 
Adding up over all $v$, and using the assumption that the total number of firings
is at most $T$, we have $|A| <  \sum_v |L_v| \le T$.

On the other hand, we can also establish a lower bound on $|A|$, as follows.
Choose a specific representative firing $t_v$ from each $L_v$. 
For each bin $i$, let $n_i$ be the number of representatives in the $i$th bin.
Any two representatives in bin $i$ contribute one pair to $A$. 
So $|A| \ge q$, for $q= \half\sum_{i=1}^k n_i(n_i-1)$.
Since $\sum_{i=1}^k n_i = n$, if we let all $n_i$ take real values then
the value of $q$ will be minimized when all $n_i$ are equal,
which implies that $q \ge cn^2/k$, for some $c > 0$.
Thus we get that $|A|\ge cn^2/k = cn^3/T$.

Combining the bounds from the last two paragraphs, we obtain that
$T \ge |A| \ge cn^3/T$, which implies that $T = \Omega(n^{1.5})$.
This completes the proof of the lower bound, with the assumption that the total
number of firings does not exceed $T$.

\smallskip

We now consider the general case, without any assumption on the total number
of firings. 
Suppose that $\calA$ is some protocol with running time  $T = o(n^{1.5})$.
Using a probabilistic argument and a reduction to the above special case,
we show that then we can construct a 
caterpillar tree and a node $w$ for which $\calA$ will fail.

Choose some $\tilden$ such that $\tilden = o(n)$ and
$T = o(\tilden^{1.5})$. (For example, we can take $\tilden = n^{1/2}T^{1/3}$.)
Let $\tildeV$ be a random set of nodes, where each node 
is included in $\tildeV$ independently with probability $\tilden/n$.  
Let also $\tildeZ$ be the set of times at which only nodes from $\tildeV$ fire in $\calA$. 
We claim that with probability $1-o(1)$, the following two 
properties will hold:
\begin{description}
\item{(i)} $T \leq |\tildeV|^{1.5}$, and
\item{(ii)} The total number of firings in $\tildeZ$ is at most $T$.
\end{description}
To justify this claim, note the expected cardinality of $\tildeV$ is $\tilden$,
so that probability that (i) is not true is
$\bfP[|\tildeV| < T^{2/3}] = o(1)$, because $T^{2/3} = o(\tilden)$ and the expectation of 
$\tildeV$ is $\tilden$.
To justify (ii), consider some time $t$ where $\calA$ has $j\ge 1$ firings.
Then the probability that $t\in\tildeZ$ is  $(\tilden/n)^j$, so
the contribution of $t$ to the expected number of firings in $\tildeZ$ is $j(\tilden/n)^j = o(1)$.
Therefore the probability that (ii) is violated is $o(1)$.

We can thus conclude that for some choice of $\tildeV$ both properties (i) and (ii)
hold. Let $\barV$ be this choice, $\barn = |\barV|$, and let $\barZ$ be the corresponding
set of times $\tildeZ$.

We convert $\calA$ into a protocol $\calB$ for labels in $\barV$, as follows.
(Note that $\barV$ may not be of the form $[\barn]$, but that does not affect the validity of our argument.)
For each time $t = 0,1,...,T-1$, if $t\in \barZ$ then the firings in $\calB$ are the same as in $\calA$;
if $t\notin\barZ$ then no node in $\barV$ fires.
By our earlier argument, there must be a caterpillar tree $\calT$ with nodes from $\barV$ and a node $w \in \barV$
for which $\calB$ fails. 

Let $\calT'$ be a modified tree obtained from $\calT$ by adding all nodes that are not in $\barV$
as children of the parent of $w$.
Consider now a firing of $w$ in $\calT'$ at time $t$.
If $t \in \barZ$, then this firing  collides with another firing from $\barV$. 
If $t\notin\barZ$, then, by the definition of $\barZ$, there is a node $u$ outside of $\barV$
that fires at the same time (otherwise $t$ would be included in $\barZ$) and is a sibling
of $w$ in $\calT'$. So again, this firing from $w$ will collide with the firing of $u$.
We can thus conclude that the rumor from $w$ will not reach the root, completing the proof of
the lower bound.

We thus obtain our lower bound.


\begin{theorem} \label{thm: fireandforward lower bound}
If $\calA$ is a deterministic fire-and-forward protocol for information gathering
in trees, then the running time of $\calA$ is $\Omega(n^{1.5})$.
\end{theorem}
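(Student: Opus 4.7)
The plan is to first reduce to oblivious protocols (using the leaf-doubling construction described earlier in the section), so that each label $v$ has a fixed firing set $F_v$ determined solely by $v$. Then I would exhibit an adversarial ``caterpillar'' tree: a backbone path $P$ of $n$ nodes with $n$ leaves labelled by $[n]$, where the adversary, after seeing $\calA$, chooses the depth $\delta_u\in[n]$ at which each leaf $u$ hangs off $P$. The key geometric observation is that two leaves $u,v$ collide at their common ancestor iff their firing times differ by exactly $\delta_u-\delta_v$; so if we bin $[0,T)$ into intervals of length $n$, then whenever $v$ fires at $t$ and $u$ fires at $t'>t$ in the same bin, there is a unique choice of $\delta_u-\delta_v\in[n]$ that turns them into a collision. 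Call such a $u$ a \emph{coverer} of the firing of $v$ at $t$, and write $C(L)$ for the set of coverers of a set $L\subseteq F_v$.

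The main counting argument I would carry out under the simplifying assumption that $\calA$'s total number of firings across $[n]$ is at most its running time $T$. The heart is a Hall-type step: if for some label $w$ we had $|C(L)|\geq |L|$ for every $L\subseteq F_w$, Hall's marriage theorem would produce a system of distinct coverers of all firings of $w$, which translates directly into an assignment of leaf depths in $[n]$ forcing \emph{every} firing of $w$ to collide, contradicting correctness. So for each $v\in[n]$ there is some $L_v\subseteq F_v$ with $|C(L_v)|<|L_v|$. I would then double-count the set $A$ of ordered pairs $(u,v)$ such that some bin contains a firing of $L_u$ strictly followed by one of $L_v$. The Hall failures give $|A|<\sum_v|L_v|\le T$. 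On the other hand, picking one representative firing per $v$ and using convexity across the $T/n$ bins gives $|A|=\Omega(n^2/(T/n))=\Omega(n^3/T)$. Combining yields $T^2=\Omega(n^3)$, i.e.\ $T=\Omega(n^{1.5})$.

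To lift the bounded-firings assumption, I would subsample a random $\tildeV\subseteq[n]$, including each label independently with probability $\tilden/n$ for $\tilden$ chosen so that $T=o(\tilden^{1.5})$ (e.g.\ $\tilden=n^{1/2}T^{1/3}$). A standard first-moment computation shows that with probability $1-o(1)$ both $|\tildeV|\ge T^{2/3}$ and the set $\tildeZ$ of time steps at which \emph{only} $\tildeV$-labels fire contains at most $T$ total firings; fix any realisation $\barV,\barZ$ where this holds. The restricted protocol (firing only at times in $\barZ$) now satisfies the bounded-firings hypothesis on a universe of size $\barn$, so the previous argument produces a caterpillar on $\barV$ and a label $w\in\barV$ whose every firing-in-$\barZ$ collides. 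To kill $w$'s firings at times outside $\barZ$, I would attach every label in $[n]\setminus\barV$ as a sibling of $w$: by the definition of $\barZ$, at any such time some label outside $\barV$ fires, giving the required collision.

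The main obstacle is pinning down the right covering relation so that the Hall step works and the double count balances: the relation must be asymmetric within each bin (only later firings cover earlier ones) so that a distinct-coverer matching yields a legitimate depth assignment in $[n]$, and so that the contributions of the firings of a single $v$ to each side of the count line up. The probabilistic reduction is technically routine once $\tilden$ is chosen to make both failure events $o(1)$; the only subtlety is noticing that a time $t$ at which $j\ge 1$ nodes fire contributes only $j(\tilden/n)^j=o(1)$ to the expected number of firings surviving in $\barZ$.
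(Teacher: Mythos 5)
Your proposal is correct and follows essentially the same route as the paper's proof: the oblivious reduction, the caterpillar tree with adversarially chosen leaf depths, the bin/covering relation with the Hall's-theorem contradiction giving sets $L_v$ with $|C(L_v)|<|L_v|$, the two-sided count of same-bin ordered pairs yielding $T^2=\Omega(n^3)$, and finally the random subsampling (with the same choice $\tilden=n^{1/2}T^{1/3}$ and sibling-attachment of $[n]\setminus\barV$ to $w$'s parent) to remove the bounded-firings assumption. No substantive differences to report.
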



\section{An $O(n\log n)$-Time Randomized Algorithm}
\label{sec: nlogn randomized}



We now show a randomized algorithm with expected 
running time $O(n\log n)$ that does not use any labels. 
Our algorithm also does not use any aggregation; each message consists only of one rumor
and no additional information.

In the description of the algorithm we assume that the number $n$ of nodes
is known. Using a standard doubling trick, the algorithm can be extended
to one that does not depend on $n$. (This new algorithm will complete
the task in expected time $O(n\log n)$, but it will keep running forever.)

We present the algorithm as a fire-and-forward algorithm (see the previous
section). In particular, we assume for now that at each step a
node can listen and transmit at the same time. We explain later how to
eliminate this feature. Recall that $r$ is the root of the input tree $\calT$.


\begin{myalgorithm}{$\algRandTree$}
Any node $v\neq r$ works like this. At each step $t$, independently of other nodes,
$v$ decides to fire with probability $1/n$.
If the decision is to fire and no rumor arrived at $v$ at step $t-1$, $v$ fires.
If the decision is not to fire and $v$ received some rumor
in the previous step, $v$ forwards this rumor in step $t$.
Otherwise, $v$ is idle.
\end{myalgorithm}

Note that if $v$ decides to fire and it received a rumor in the previous step, then
$v$ will not transmit at all. 


\emparagraph{Analysis.}
We start with the following lemma. 

\begin{lemma}\label{lem: algRandTree prob of z}
At each step $t\ge n$, for each node $z\neq r$,
the probability that $r$ receives rumor $\rho_z$ 
at step $t$ is at least $\frac{1}{n}(1-\frac{1}{n})^{n-1}$.
Furter, for different $t$, the events of $r$ receiving $\rho_z$
are independent.
\end{lemma}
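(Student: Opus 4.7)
The plan is to prove both assertions of the lemma simultaneously by constructing a single sub-event $\mathcal{F}(z,t)$ that lies inside the success event, is easy to lower-bound, and already exposes the independence structure. For every $u\in\calT\setminus\{r\}$ let $d_u$ denote the distance from $u$ to $r$ and set $\sigma(u)=t-d_u+1$; since $t\ge n$, every such $\sigma(u)$ is a valid step. Define $\mathcal{F}(z,t)$ to be the event that for each $u\in\calT\setminus\{r\}$ the independent Bernoulli$(1/n)$ coin of $u$ at step $\sigma(u)$ comes up \emph{fire} if $u=z$ and \emph{not fire} otherwise. Because these coins are jointly independent, we immediately obtain
\[
\Prob[\mathcal{F}(z,t)]=\tfrac{1}{n}\bigl(1-\tfrac{1}{n}\bigr)^{n-2}\;\ge\;\tfrac{1}{n}\bigl(1-\tfrac{1}{n}\bigr)^{n-1}.
\]

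The main step is to show $\mathcal{F}(z,t)\Rightarrow$ ``$r$ receives $\rho_z$ at step $t$''. The intuition is that $\sigma(\cdot)$ identifies precisely the ``danger'' step of each node: any rumor originating from a fire of some $x\in\calT_u$ at step $\sigma(x)$ would, under uninterrupted forwarding, arrive at $u$ at step $\sigma(u)-1$. I will show by induction on $|\calT_u|$ that under $\mathcal{F}(z,t)$ every off-path node $u$ is silent at step $\sigma(u)$: its own coin at $\sigma(u)$ reads ``not fire'', and by the inductive hypothesis none of its children transmits at $\sigma(u)-1$, so there is nothing to forward either. This kills every potential collider along the path. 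Combined with $z$'s ``fire'' coin at $\sigma(z)=t_0=t-d+1$ and the ``not fire'' coins of $v_1,\dots,v_{d-1}$ at their $\sigma$-steps, a simple induction along the path $z=v_0,v_1,\dots,v_d=r$ then shows that $\rho_z$ walks up one hop per step and arrives at $r$ exactly at step $t$. The main obstacle is verifying that this diagonal indexing is tight enough: one must rule out \emph{forwarded} rumors originating arbitrarily deep below each off-path node, not just immediate sibling fires, which is exactly why the induction has to be on subtree size.

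The independence statement follows from a stronger structural claim proved by the same subtree induction: for any $w\in\calT$ and step $s$, what $w$ transmits at step $s$ (whether, and if so which rumor) is a deterministic function of the decisions at the pairs $(y,s-\mathrm{dist}(w,y))$ for $y\in\calT_w$, since $w$'s behavior at $s$ is determined by $w$'s own coin at $s$ together with what $w$'s children transmit at step $s-1$. Specializing to $w=r$ and $s=t$, the event ``$r$ receives $\rho_z$ at $t$'' depends only on the diagonal decisions $\{(y,\sigma(y)):y\in\calT\setminus\{r\}\}$. For different values of $t$ these diagonals use disjoint (node, step) pairs---each fixed node is consulted at a different step---so the success events for distinct $t$ are functions of disjoint collections of i.i.d.\ coins and are therefore mutually independent.
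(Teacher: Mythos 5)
Your proposal is correct, and at its core it is the same argument as the paper's: both reduce success at step $t$ to the ``diagonal'' event that $z$'s coin at the unique step from which its fired rumor would reach $r$ at time $t$ says fire, while every other non-root node's coin at its own diagonal step says not fire; this has probability $\frac{1}{n}(1-\frac{1}{n})^{n-2}\ge\frac{1}{n}(1-\frac{1}{n})^{n-1}$, and independence across $t$ comes from the diagonals for distinct $t$ using disjoint coins. The difference is in how the implication ``diagonal event $\Rightarrow$ $r$ receives $\rho_z$ at $t$'' is verified: the paper introduces \emph{virtual messages} (run the process ignoring collisions with aggregated rumor sets and observe that a singleton virtual message at $r$ forces real receipt), whereas you verify it directly by induction on subtree size (every off-path node is silent at its diagonal step, hence nothing deep below can be forwarded into a collision) followed by the relay along the path. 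Your route is more elementary and, via the structural claim that what any node transmits at step $s$ is a deterministic function of the coins $(y,s-\mathrm{dist}(w,y))$, it actually establishes the independence of the \emph{actual} receipt events across $t$ more explicitly than the paper does (the paper's independence remark is, strictly, about the per-step coin events). One small point you should state explicitly: because a node whose coin says fire suppresses its transmission if a rumor arrived in the previous step, you need that no child of $z$ transmits at $\sigma(z)-1$ so that $z$ really does fire; this follows immediately from your off-path silence claim applied to $z$'s children, but it deserves a sentence.
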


\begin{proof}
To prove the lemma, it helps to view the computation in a slightly different, but
equivalent way. Imagine that we ignore collisions, and we allow each message
to consist of a set of rumors.
If some children of $v$ transmit at a time $t$, then $v$ receives all rumors
in the transmitted messages, and at time $t+1$ it transmits a message
containing all these rumors, possibly adding its own rumor, if $v$
decides to fire at that step. We will refer to these messages as
\emph{virtual messages}. In this interpretation, 
if $r$ receives a virtual message that is a singleton set $\rho_z$, at some time $t$,
then in Algorithm~$\algRandTree$ this rumor $\rho_z$ will be received by $r$
at time $t$. (Note that the converse is not true, because it may happen
that $r$ will receive a virtual message that has other rumors besides $\rho_z$,
but in Algorithm~$\algRandTree$ these other rumors may get stopped earlier
due to collisions, so they will not collide with $\rho_z$.)

Fix  a time $t$ and some $z\in \calT-\braced{r}$. By the above paragraph, it
is sufficient to show that the probability that at time $t$ the
virtual message reaching $r$ is the singleton $\braced{z}$ is equal 
$\frac{1}{n}(1-\frac{1}{n})^{n-1}$.
This event will occur if and only if:
\begin{itemize}
	\item At time $t-\depth(z)$, $z$ decided to fire, and
	\item For each $u\in \calT - \braced{z,r}$,  $u$ did not decide to fire
		at time $t-\depth(u)$.
\end{itemize}
By the algorithm, all these events are independent, so the probability of this combined
event is exactly $\frac{1}{n}(1-\frac{1}{n})^{n-1}$, as needed.
\end{proof}

By Lemma~\ref{lem: algRandTree prob of z}, for any step $t\ge n$,
the probability of any given rumor $\rho_z$ reaching $r$ in step $t$ is
at least as large as the probability of collecting a given
coupon in the coupon collector problem. We thus obtain the
following theorem.


\begin{theorem}\label{thm: rand algorithm}
Algorithm~$\algRandTree$ has expected running time $O(n\log n)$. In fact,
it will complete gathering in time $O(n\log n)$ with probability $1-o(1)$.
\end{theorem}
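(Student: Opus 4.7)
The plan is to combine Lemma~\ref{lem: algRandTree prob of z} with a direct union-bound tail estimate. I set $p := \frac{1}{n}\bigl(1-\frac{1}{n}\bigr)^{n-1}$; a routine estimate gives $p \geq \frac{1}{en}$ for all $n \geq 2$. By the lemma, for every non-root node $z$ and every step $t \geq n$, the probability that $r$ receives $\rho_z$ at step $t$ is at least $p$, and these events are mutually independent as $t$ varies (with $z$ fixed). Consequently, the first time $\geq n$ at which $r$ receives $\rho_z$ is stochastically dominated by $n + G_z$, where $G_z$ is geometric with parameter $p$.

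I would then bound the completion time $T^\ast$ of the protocol via a direct union bound. For a fixed $z$ and any positive integer $T$, the probability that $r$ fails to receive $\rho_z$ in any of the steps $n, n+1, \dots, n+T-1$ is at most $(1-p)^T \leq e^{-pT} \leq e^{-T/(en)}$. Taking a union bound over the $n-1$ non-root nodes yields
\[
\Pr[T^\ast \geq n + T] \;\leq\; n\, e^{-T/(en)}.
\]
Choosing $T = \lceil 3 e n \ln n\rceil$ makes the right-hand side at most $n^{-2}$, which establishes that $T^\ast = O(n\log n)$ with probability $1-o(1)$.

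For the expectation bound, I would iterate the same estimate on consecutive disjoint blocks of $T = \lceil 3en\ln n\rceil$ steps. Because the per-rumor reception events in Lemma~\ref{lem: algRandTree prob of z} are independent across time, the per-rumor failure probabilities multiply across blocks, and a union bound over $z$ gives $\Pr[T^\ast \geq n + kT] \leq n\, e^{-kT/(en)} \leq n^{1-3k}$. Summing this geometric tail,
\[
\bfE[T^\ast] \;\leq\; n + T + T\sum_{k\geq 1} n^{1-3k} \;=\; O(n\log n).
\]

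The argument presents no substantial obstacle once Lemma~\ref{lem: algRandTree prob of z} is in hand. The one point demanding care is that the lemma supplies independence across time for each fixed rumor but \emph{not} across different rumors; hence the combination over $z$ must be handled by a union bound rather than by an appeal to independence.
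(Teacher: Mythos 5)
Your proposal is correct and follows essentially the same route as the paper: the paper's proof reduces the statement to the coupon-collector problem via Lemma~\ref{lem: algRandTree prob of z} (per-step reception probability at least $\frac{1}{n}(1-\frac{1}{n})^{n-1}$, independent across steps for each fixed rumor), and your explicit per-rumor geometric tail bound plus union bound over rumors is just that coupon-collector argument written out. Your remark that independence holds across time but not across rumors, so the combination over $z$ must be a union bound, is exactly the right point of care and is consistent with the paper's argument.
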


It remains to argue that we can convert Algorithm~$\algRandTree$ into
the standard model, where receiving and transmitting at the same time is
not allowed. We only give a sketch of the argument here. The algorithm
is essentially the same, with the only difference being that 
if the algorithm decides to fire then it will go into the transmit
state, otherwise it will stay in the receive state. With this change,
some rumors may get rejected, when they are transmitted (without
collision) to a node that happens to be in the transmitting state.
Calculations similar to those above show that this does not change
the asymptotic running time.


\section{An $\Omega(n\log n)$ Lower Bound for Randomized Algorithms}
\label{sec: nlogn randomized lower bound}



In this section, we will show that Algorithm~$\algRandTree$ is within a constant factor of optimal for
label-less algorithms, even if the topology of the tree is known in advance.
Actually we will show something a bit stronger, namely that there is a constant $c$ such
that any label-less algorithm with running time less than $c n \ln n$ will almost surely have some
rumors fail to reach the root on certain trees.

The specific tree we will use here is that of the star graph, consisting of the root with $n$
children that are also the leaves in the tree. (We use $n+1$ nodes instead of $n$, for convenience, but this
does not affect our lower bound.)
In this tree, these leaves are entirely isolated: No leaf receives any information from the root
or from any other leaf.  Thus
at each time step $t$, each leaf $v$ transmits with a probability that
can depend only on $t$ and on the set of previous times at which $v$ attempted to transmit.
Note that, unlike in Algorithm~$\algRandTree$,
the actions of $v$ at different steps may not be independent, which complicates the argument.
Allowing some dependence, in fact, can help reduce the running time, although only by a
constant factor (see Theorem~\ref{thm:nolabelstarbound}).

For the star graph,
we can equivalently think of a label-less algorithm running in time $T$ as
a probability distribution over all subsets of $\braced{0,1,\dots,T-1}$ representing
the sets of transmission times of each node.
Each node $v$ independently picks a subset $S_v$
according to the distribution, and transmits only at the times in $S_v$.  The label-less
requirement is equivalent to the requirement that the $S_v$ are identically distributed.
Node $v$ succeeds in transmitting if there is a time $t$ such that $t \in S_v$,
but $t \notin S_w$ for any $w \neq v$.

The main result of this section is the following lower bound.


\begin{theorem} \label{thm:nolabellowerbound}
If $\calR$ is a randomized protocol for information gathering on trees then
the expected running time of $\calR$ is $\Omega(n\ln n)$.
More specifically, if $n$ is large enough and we run
$\calR$ on the $n$-node star graph for $T\le c n \ln n$ steps,
where $c < \frac{1}{\ln^2{2}}$, then
there will almost surely be at least one rumor that fails to reach the root.
\end{theorem}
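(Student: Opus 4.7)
My plan is to work directly with the $n$-leaf star graph appearing in the statement and exploit its complete symmetry. On this graph the leaves receive no communication whatsoever, so each leaf's transmission schedule $S_v \subseteq \{0,1,\ldots,T-1\}$ is a function only of its private randomness, and the label-less assumption then forces $S_1, \ldots, S_n$ to be i.i.d.\ draws from a single distribution $\pi$ on $2^{[T]}$. Leaf $v$ succeeds iff some $t \in S_v$ avoids every other $S_w$ with $w \neq v$.

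The first move is a marginal analysis. Setting $q_t = \Pr_{\pi}[t \in S_v]$, by cross-leaf independence the probability that $v$ is the unique transmitter at time $t$ equals $q_t(1-q_t)^{n-1}$, a function maximized over $q_t\in[0,1]$ at $q_t = 1/n$ with value at most $(1-1/n)^{n-1}/n \le 1/(en)$. Consequently the expected number of ``singleton'' time slots is at most $T/e$. Moreover the label-less symmetry implies that, conditional on time $t$ being a singleton, the identity of the unique transmitter is uniform over $[n]$. This heuristically reduces the problem to a coupon-collector: each singleton slot reveals a uniformly chosen leaf's rumor, and all $n$ leaves must be covered; the appearance of the $\ln n$ factor in the lower bound comes from this coupon-collector reduction.

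To convert the heuristic into the claimed bound with constant $1/\ln^2 2$ I would (i) lower bound the single-leaf failure probability via a correlation inequality: the ``covering'' events $\{\exists w\neq v: t \in S_w\}$ are all increasing functions of the independent sets $(S_w)_{w\neq v}$, so a Harris/FKG-type argument gives a bound of the form
\[
\Pr[v \text{ fails}] \;\ge\; \mathbb{E}_{S_v}\!\left[\prod_{t \in S_v}\bigl(1 - (1-q_t)^{n-1}\bigr)\right];
\]
(ii) optimize this lower bound over the marginals $q_t$ subject to the constraint $\sum_t q_t = \mathbb{E}_\pi[|S_v|] \le T$; and (iii) apply a second-moment / Paley--Zygmund argument to the number of failing leaves, checking that pairs of failure events are only mildly correlated, so as to conclude $\Pr[\text{no leaf fails}] \to 0$. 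The explicit value $1/\ln^2 2$ should fall out of carefully balancing the entropy-like term in step (ii) against the second-moment loss in step (iii).

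The hard part is reconciling two layers of dependence. Within a single $S_v$ the marginals $q_t$ do not determine the joint law of the indicators $\{t \in S_v\}$, so ``$v$ is unique at some time'' is not a union of independent events; FKG salvages a usable lower bound but only with some slack. Across leaves, the success events for different $v$ are correlated through the shared time axis, so a naive independence estimate $\Pr[\text{all succ}] \le \prod_v \Pr[v \text{ succ}]$ is unavailable, which is exactly why one needs the second-moment route rather than a direct product bound. Squeezing both slacks to yield the sharp constant $1/\ln^2 2$, rather than the coarser constants one would get from either step in isolation, is what I expect to be the most delicate part of the argument.
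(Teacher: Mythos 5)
Your setup (star graph, i.i.d.\ schedules $S_v$, marginals $q_t$, and the entropy-style optimization of $\mathbb{E}_{S_v}\bigl[\prod_{t\in S_v}(1-e^{-q_t})\bigr]$ under $\sum_t q_t \lesssim T$, which is where $1/\ln^2 2$ comes from via minimizing $x\ln(1-e^{-x})$ at $x=\ln 2$) matches the paper's outline. The genuine gap is your step (i). Harris/FKG does not apply here: the events $A_t=\{\exists w\neq v:\ t\in S_w\}$ are increasing in the independent objects $(S_w)_{w\neq v}$, but each $S_w$ is an arbitrary distribution on the \emph{partially ordered} lattice $2^{[T]}$, and positive association of a product measure requires each factor to be positively associated (e.g.\ log-supermodular), which an arbitrary schedule distribution is not. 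In fact the clean inequality $\Pr[\bigcap_{t\in S_v}A_t]\ge\prod_{t\in S_v}\Pr[A_t]$ is simply false in general: if every node transmits at exactly one uniformly random time, the events ``time $t_1$ is hit by someone else'' and ``time $t_2$ is hit by someone else'' are \emph{negatively} correlated (standard negative association of occupancy counts), so the conditional failure probability can drop below the product. This is precisely why the paper does not use a correlation inequality: its Lemma on monotonicity is proved by a mass-shifting/rearrangement argument (repeatedly moving probability mass from a multi-hit pattern $U$ and $\emptyset$ to $U\setminus\{t\}$ and $\{t\}$, shown never to decrease the failure probability), which reduces to the case where each other node hits $S_v$ at most once, and even then the bound only holds up to a factor $1-O(n^{-1+o(1)})$ and only under two truncation assumptions ($|S_v|\le\ln^{10}n$ and per-step marginals at most $\ln^4 n/n$) that must be imposed first and removed at the end. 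Your proposal contains no substitute for this rearrangement step, and an FKG-based proof of the key lemma cannot work as stated.

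A secondary weakness is the concluding concentration step. Paley--Zygmund from a second moment gives $\Pr[X>0]\ge(\mathbb{E}X)^2/\mathbb{E}[X^2]$, which yields probability $1-o(1)$ only if you can show $\mathbb{E}[X^2]=(1+o(1))(\mathbb{E}X)^2$ for an adversarially chosen schedule distribution; since the argument only guarantees $\mathbb{E}X=n^{\Omega(1)}$ (possibly a small power of $n$) and failure events of different leaves can be strongly correlated through the shared time axis, this tight second-moment control is not obviously available. The paper instead uses Talagrand's inequality, exploiting that $X$ is Lipschitz with constant $\ln^{10}n$ and certifiable --- properties that themselves rely on the truncation Assumption~1 your plan omits. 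So beyond the FKG issue, you would still need either a genuine second-moment estimate or a Talagrand/certifiability argument, plus the reduction removing the two assumptions, to complete the proof.
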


As we show, our lower bound is in fact tight, in the sense that the value of the
constant $c$ in the above theorem is best possible for star graphs.

\begin{theorem} \label{thm:nolabelstarbound}
If $T=c n \ln n$, where $c > \frac{1}{\ln^22}$, then there is a protocol which succeeds
on the star graph in time $T$ with probability $1-o(1)$.
\end{theorem}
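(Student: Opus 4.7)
My plan is to analyze a symmetric protocol in which each leaf independently selects a uniformly random $k$-element subset of the time horizon and fires exactly at those times. Concretely, set $k = \lceil c(\ln 2)\ln n\rceil$, giving effective transmission rate $p := k/T = (\ln 2)/n + o(1/n)$; each leaf $v$ samples $S_v \subseteq \{0,1,\ldots,T-1\}$ uniformly with $|S_v|=k$ and transmits at each step in $S_v$. Because all leaves are identically distributed, it suffices to bound the failure probability for one fixed leaf $v_0$ and apply a union bound over the $n$ leaves.

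Next, I condition on $S_{v_0}$; by the symmetry of the uniform distribution on $k$-subsets, the conditional failure probability does not depend on which $k$-set is chosen, so fix $S_{v_0} = \{t_1,\ldots,t_k\}$. The leaf $v_0$ fails precisely when every $t_i$ lies in some other $S_w$, that is, when each event $A_i := \bigcap_{w\neq v_0}\{t_i \notin S_w\}$ fails. Independence across $w$ gives $\Pr[A_i] = ((T-k)/T)^{n-1} = (1-p)^{n-1}$, and a Taylor expansion with $p=(\ln 2)/n$ shows $(1-p)^{n-1} = \tfrac{1}{2}+O(1/n)$.

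The crux of the argument is the joint bound
\begin{equation*}
\Pr\!\Bigl[\,\bigcap\nolimits_{i=1}^{k} A_i^c \,\Bigm|\, S_{v_0}\Bigr] \,\le\, \bigl(1-(1-p)^{n-1}\bigr)^{k},
\end{equation*}
which I would establish using negative association. For each fixed $w\neq v_0$, the indicators $\{\mathbb{1}[t_i\in S_w]\}_{i\in[k]}$ form a negatively associated collection, as they are membership indicators of a uniformly random fixed-size subset; across different $w$ these families are independent, so the joint collection $(\mathbb{1}[t_i\in S_w])_{i,w}$ is negatively associated. For each $i$, $\mathbb{1}[A_i^c] = \max_{w}\mathbb{1}[t_i \in S_w]$ is a non-decreasing function of the sub-family $(\mathbb{1}[t_i\in S_w])_{w\neq v_0}$; distinct $i$'s act on disjoint sub-families with a common direction of monotonicity, so closure of negative association yields that $\{\mathbb{1}[A_i^c]\}_{i\in[k]}$ is itself negatively associated. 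This gives $\Pr[\bigcap_i A_i^c] = E\bigl[\prod_i \mathbb{1}[A_i^c]\bigr] \le \prod_i \Pr[A_i^c]$, which is the claimed inequality.

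Given this, the remainder is arithmetic: substituting $k=\lceil c(\ln 2)\ln n\rceil$ and $1-(1-p)^{n-1} = \tfrac{1}{2}+O(1/n)$ gives $\Pr[v_0\text{ fails}] \le 2^{-k}(1+o(1)) \le n^{-c\ln^2 2}(1+o(1))$. With $c > 1/\ln^2 2$ we have $c\ln^2 2 = 1+\eta$ for a constant $\eta>0$, and the union bound over the $n$ leaves yields an overall failure probability of at most $n^{-\eta}(1+o(1)) = o(1)$. The main obstacle I expect is the negative-association step: the events $A_i$ are only pairwise negatively correlated, and this alone does not imply the product bound for $\Pr[\bigcap_i A_i^c]$, while a second-moment or Chebyshev argument yields only $\Pr[v_0\text{ fails}] = O(1/\log n)$, which is far too weak to survive a union bound over $n$ leaves. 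One could alternatively avoid the NA machinery by coupling the protocol to a with-replacement variant, where the analogous factorization is a bit more direct, but the NA route seems cleanest.
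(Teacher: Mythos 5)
Your proposal is correct, but it takes a genuinely different route from the paper. The paper's protocol partitions $\{0,1,\dots,T\}$ into $c\ln 2\ln n$ disjoint intervals of length $n/\ln 2$ and has each node pick one uniformly random slot \emph{per interval}; because choices in different intervals are independent, the per-node failure probability factors exactly into $\left(\frac12+o(1)\right)^{c\ln 2\ln n}=o(n^{-1})$ with no further work, and the union bound finishes the proof in a few lines. You instead let each node pick a uniformly random $k$-subset of the whole horizon, $k=\lceil c\ln 2\ln n\rceil$, which destroys this exact independence among the $k$ collision events at $v_0$'s firing times; your fix --- negative association of the membership indicators of a fixed-size random subset, closure under independent unions and under monotone functions of disjoint index sets, yielding $\Pr[\bigcap_i A_i^c\mid S_{v_0}]\le\prod_i\Pr[A_i^c]$ --- is a correct and standard application of NA, and the rest of your arithmetic goes through (one small quibble: because of the ceiling in $k$, the error term is $(1-p)^{n-1}=\tfrac12+O(1/\ln n)$ rather than $\tfrac12+O(1/n)$, but this only changes the bound by an $e^{O(1)}$ factor and the conclusion $n\cdot O(n^{-c\ln^2 2})=o(1)$ is unaffected). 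The trade-off is clear: the paper buys triviality of the analysis by engineering independence into the protocol, while your argument needs the NA machinery but shows that the interval structure is inessential --- any exchangeable fixed-size sampling of firing times at rate $\ln 2$ per $n$ steps achieves the same optimal constant $c>1/\ln^2 2$.
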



\subsection{Proof of Theorem~\ref{thm:nolabellowerbound} (Modulo Two Assumptions)}

For now we will assume that our transmission distribution satisfies two additional
simplifying assumptions for every node $v$:
\begin{description}
\item{\textbf{Assumption 1}:} $|S_v| \le  \ln^{10} n$ with probability $1$.
\item{\textbf{Assumption 2}:} $\bfP(t \in S_v) \leq \frac{\ln^4 n}{n}$,
for each time $t$.
\end{description}
Later we will show how to remove these assumptions.

\smallskip

For each time $t$, let $q_t$ equal $n-1$ times the (unconditional) probability that a given
fixed node transmits at time $t$.
Then the conditional probability that a node successfully
transmits its rumor at time $t$, given that it attempted to transmit, is
\begin{equation*}
\textstyle
\left(1-\frac{q_t}{n-1}\right)^{n-1} = e^{-q_t}(1 + O\left(\frac{q_t}{n}\right)).
\end{equation*}
In the rest of the proof we will assume, for simplicity, that $q_t > 0$ for all $t= 0,1,...,T-1$,
since in the steps $t$ when $q_t = 0$ all nodes would be idle, and we can
simply eliminate such steps from the algorithm.

What we would like to do is use this to obtain an estimate on the probability of
successful transmission for a node $v$, given its set $S_v$.
If the transmissions were independent, this would be easy: just take the product
of the failure probabilities at each time in $S_v$ and subtract it from $1$.
As it turns out, this (nearly) gives a lower bound on the failure probability.


\begin{lemma} \label{lemma:monotonicity}
For any distribution that satisfies Assumptions~1 and~2, and any fixed $v$ and $S_v$, we have
\begin{equation*}
\bfP\left(v \textrm{ fails}\, \vline \, S_v\right)
		\geq (1-O(n^{-1+o(1)}))\prod_{t \in S_v} \left(1-e^{-q_t}\right),
\end{equation*}
where ``$v \textrm{ fails}$'' is the event that the rumor from $v$
does not reach the root.
\end{lemma}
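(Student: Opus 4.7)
The plan is to apply inclusion–exclusion to both sides and compare term by term. Using independence of $S_w$ across $w \neq v$, one has
\[
\bfP(v \text{ fails} \mid S_v) \;=\; \sum_{U \subseteq S_v}(-1)^{|U|}\gamma_U^{\,n-1}, \qquad \gamma_U \,=\, \bfP(S_w \cap U = \emptyset),
\]
while $\prod_{t \in S_v}(1 - e^{-q_t}) = \sum_{U \subseteq S_v}(-1)^{|U|} e^{-\sum_{t \in U} q_t}$. So it suffices to estimate $\gamma_U^{\,n-1}$ in terms of $e^{-\sum_{t \in U} q_t}$ and then sum the resulting errors with the alternating signs.

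For the termwise estimate, I would write $\gamma_U = 1 - P_U$ with $P_U = \bfP(S_w \cap U \neq \emptyset) \leq \sum_{t \in U} p_t$. Assumption~2 gives $P_U \leq |U|\ln^{4} n / n$, and Assumption~1 gives $|U| \leq \ln^{10} n$, so $P_U \leq \ln^{14} n / n$, which is small enough that the Taylor expansion of $\ln(1-P_U)$ yields $\gamma_U^{\,n-1} = e^{-(n-1)P_U}\bigl(1 + O(n^{-1+o(1)})\bigr)$. The remaining gap between $e^{-(n-1)P_U}$ and $e^{-\sum_{t \in U} q_t}$ comes from the ``overshoot'' $\sum_{t \in U} p_t - P_U = \bfE[(|S_w \cap U| - 1)_+]$, which is controlled via $\bfE[(X-1)_+] \leq \bfE[X(X-1)]/2 = \sum_{t<t'} p_{tt'}$ together with Assumption~2 (since $p_{tt'} \leq \min(p_t,p_{t'}) \leq \ln^{4}n/n$).

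The main obstacle is that this overshoot bound is only polylogarithmic in $n$, not $O(n^{-1+o(1)})$, so the naive termwise comparison loses too much. The way around this is to exploit the alternating-sign structure: a multi-hit configuration $|S_w \cap S_v|\ge 2$ produces the overshoot, but such a node simultaneously covers several elements of $S_v$ and therefore only \emph{helps} $v$ fail, so the discrepancies in adjacent levels of $|U|$ should largely cancel. Concretely, I would split the calculation according to the profile $(|S_w \cap S_v|)_{w}$, isolate the contribution from the event ``every node contributes at most one hit'' (which reduces to a generalised coupon-collector problem whose failure probability can be shown asymptotically equal to $\prod_{t\in S_v}(1-e^{-q_t})$ via inclusion–exclusion on the singletons, using Assumption~2 in the form $(1-p_t)^{n-1} = e^{-q_t}(1+O(n^{-1+o(1)}))$), and then argue that the remaining multi-hit contributions can only increase $\bfP(v \text{ fails} \mid S_v)$. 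Combining these with the exponential decay of $\gamma_U^{n-1}$ in $|U|$ to control the tail of the sum over $U$, the cumulative relative error stays at $O(n^{-1+o(1)})$, as claimed.
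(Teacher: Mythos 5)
Your setup (inclusion--exclusion over $U\subseteq S_v$, the estimate $\gamma_U^{\,n-1}=e^{-(n-1)P_U}\bigl(1+O(n^{-1+o(1)})\bigr)$, and the diagnosis that the termwise comparison loses a polylogarithmic overshoot coming from multi-hit nodes) is sound, and you correctly identify that the whole difficulty is the correlation within $S_w\cap S_v$. But the step you propose to close the gap does not work as stated. Isolating ``every node contributes at most one hit'' is a \emph{conditioning} (or restriction) of the distribution, and conditioning does not preserve the marginals $q_t$; consequently the conditional failure probability need not be anywhere near $\prod_{t\in S_v}(1-e^{-q_t})$. The paper's own illustrative example already kills this step: let each node transmit at the pair $\{2t-1,2t\}$ for a uniformly random $t$ (this satisfies Assumptions~1 and~2), and fix $S_v=\{2t_0-1,2t_0\}$. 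Then every other node has $|S_w\cap S_v|\in\{0,2\}$, so the ``at most one hit'' event coincides with ``no hit at all'', the conditional failure probability is $0$, while $\prod_{t\in S_v}(1-e^{-q_t})$ is bounded away from $0$. In this example the entire failure probability lives in the multi-hit configurations, so the complementary claim that multi-hit contributions ``can only increase'' the failure probability (relative to the single-hit part) cannot by itself recover the product lower bound; you would need a quantitative statement saying the multi-hit mass contributes \emph{at least} the deficit, which is exactly what is missing.

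The idea that is needed (and that the paper uses) is not a conditioning but a \emph{marginal-preserving redistribution} of the law of $S_w\cap S_v$: if some pattern $U$ with $|U|\ge 2$ has positive mass $y_U$, move mass $y_U$ from the pair of patterns $(U,\emptyset)$ to $(U\setminus\{t\},\{t\})$ (Assumptions~1--2 guarantee $y_\emptyset>y_U$ so this is legitimate), which leaves every $q_t$ unchanged; one then proves a pointwise monotonicity claim — for any fixed behaviour of the other nodes, this operation never increases the probability that $v$ fails — and iterates until every node hits $S_v$ at most once. Only after this reduction does your ``generalised coupon-collector'' computation with $(1-p_t)^{n-1}=e^{-q_t}(1+O(n^{-1+o(1)}))$ become valid (the paper also truncates at $n^{1/6}$ simultaneous transmitters to control the conditioning between the times $t_1<\dots<t_k$). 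So your proposal contains the right target distribution but lacks the transport-plus-monotonicity argument that connects it to the original one; as written, the decomposition by profiles does not yield the lemma.
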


The inequality in this lemma only goes in one direction.  As an example, consider the
case where $T=2n$ and each node transmits at times $2t$ and $2t-1$, for a uniformly chosen $t$.
Then if a node's first transmission collides, it makes it significantly more likely
(in fact guaranteed) that the second one does as well.  What the lemma states, in
essence, is that
it cannot make it significantly \emph{less} likely.  We will assume the truth of
Lemma~\ref{lemma:monotonicity} for now and return to its proof later on.

Let $X$ be the random variable representing the number of nodes whose rumors fail
to reach the root. We next show that the expectation of $X$ is large, which is equivalent
to showing that the probability any fixed node $v$ fails is significantly larger than $1/n$.
By Lemma \ref{lemma:monotonicity}, we have
\begin{eqnarray*}
\bfP(v \textrm{ fails }) &=& \bfE_{S_v} \left[\bfP\left(v \textrm{ fails } \vline \, S_v\right)\right]
		\\
&\geq&
\textstyle
(1-O(n^{-1+o(1)})) \bfE_{S_v} \left[\prod_{t \in S_v} \left(1-e^{-q_t}\right)\right]
		\\
&=&
\textstyle
(1-O(n^{-1+o(1)})) \bfE_{S_v} \left[\prod_{t=1}^T \left(1-e^{-q_t} \chi(t \in S_v) \right)\right].
\end{eqnarray*}
Here $\chi(t \in S_v)$ is an indicator function equal to $1$ if $t$ is in $S_v$ and $0$ otherwise.
We now apply a variant of Jensen's inequality which states that $f(\bfE[Y])\ge \bfE[f(Y)]$,
for a random variable $Y$ and a concave function $f(Y)$. Using this inequality, we have
\begin{eqnarray*}
\textstyle
\ln \bfP(v \textrm{ fails}) &\geq&
	\textstyle
	\bfE_{S_v} \left[\ln \left( \prod_{t=1}^T \left(1-e^{-q_t} \chi(t \in S_v)\right)\right)\right]-n^{-1+o(1)}
	\\
&=&
\textstyle
\sum_{t=1}^T \bfE_{S_v} \left[\ln\left(1-e^{-q_t} \chi(t \in S_v)\right)\right]-n^{-1+o(1)}
\\
&=&
\textstyle
\sum_{t=1}^T \frac{q_t}{n-1} \ln\left(1-e^{-q_t}\right)-n^{-1+o(1)}
\\
&\geq&
\textstyle
\frac{T}{n-1} \inf_{x > 0 } \left(x \ln\left(1-e^{-x}\right)\right)-n^{-1+o(1)}
\\
&=& -(c+o(1))\ln^2{2} \ln n.
\end{eqnarray*}
The second inequality holds because $x \ln(1-e^{-x})$, for $x>0$, is minimized
when $x = \ln{2}$.

For any $c< \frac{1}{\ln^2{2}}$, the above bound on $\ln \bfP(v \textrm{ fails})$
implies that any individual node $v$ fails with probability
$n^{-1+\Omega(1)}$, so the expected number of nodes which fail is $\bfE[X] = n^{\Omega(1)}$.

To show concentration around this expectation of $X$, we utilize Talagrand's inequality. Think of $X$
as a function of transmission sets, $X = X(S_1, S_2, \dots, S_n)$.
This function $X$ is Lipschitz, in the sense that changing a single $S_w$ can only change $X$ by at
most $\ln^{10} n$ (by Assumption~1, node $w$ transmits at most $\ln^{10} n$ times, and each transmission
can only interfere with at most one otherwise successful transmission).  Furthermore, $X$ is locally
certifiable in the following sense: If $X \geq x_0$ for some $x_0$, then there is a subset $I$ of at
most  $2 x_0 \ln^{10} n$ nodes such that $X$ remains larger than $x_0$ no matter how we change the
transmission patterns of the nodes outside $I$.  For example, we can construct $I$ as follows.
Start with $I_0$ being a set of $x_0$ nodes that failed to transmit successfully. Then, for each
$v\in I_0$ let $J_v$ be a set of at most $\ln^{10} n$ nodes such that whenever $v$ transmits then
at least one of the nodes in $J_v$ also transmits. Then the set $I = I_0\cup \bigcup_{v\in I_0} J_v$ has the
desired properties.

Let $b$ be the median value of $X$.  It follows from the above two properties of $X$,
together with Talagrand's Inequality \cite{Talagrand_concentration_95}
(see section 7.7 of \cite{Alon_Spencer_Prob_Method_08} for the specific version used here), that
\begin{equation} \label{eqn:talagrand}
\bfP\left(|X-b| \geq \gamma \ln^{15} n \sqrt{b} \right) \leq 4 e^{-\gamma^2/4}.
\end{equation}
To finish, it is enough to show that $b$ is large.  Taking $\gamma =2 \ln n$ in the above inequality,
we have
\begin{equation}
\bfP \left(X \geq b+2 \ln^{16} n \sqrt{b}\right) \leq 4e^{-\ln^2 n}.
		\label{talagrand2}
\end{equation}
Since $X$ is always at most $n$, inequality~(\ref{talagrand2}) implies that
the contribution to $\bfE[X]$ from the values of $X$ that are at least
$b+2 \ln^{16}n \sqrt{b}$ is $n^{o(1)}$.
On the other hand, the contribution to $\bfE[X]$ from the remaining values of $X$ can
be at most $b+2 \ln^{16} n \sqrt{b}$.  It follows that
\begin{equation*}
n^{\Omega(1)} = \bfE(X) \leq b+2 \ln^{16} n \sqrt{b} + n^{o(1)},
\end{equation*}
from which we obtain $b=n^{\Omega(1)}$.
Applying \eqref{eqn:talagrand} with $\gamma =\sqrt{b} \ln^{-15} n$ now gives
\begin{equation*}
\bfP(X=0) \leq \bfP(|X-b| \geq b) \leq 4e^{-b \ln^{-30} n/4}=n^{-\omega(1)},
\end{equation*}
which implies Theorem~\ref{thm:nolabellowerbound}.

\smallskip

It remains to prove Lemma~\ref{lemma:monotonicity} and to remove Assumptions~1 and~2,
that we originally placed on the distribution of transmission times.


\subsection{Proof of Lemma~\ref{lemma:monotonicity}}

The rough idea of the proof of this lemma will be as follows: Suppose we fixed $S_v$ and all of the $q_t$,
and tried to choose our distribution so as to minimize the probability $v$ fails.  Each single transmission
fails with a fixed probability dependent only on $q_t$.  Minimizing the probability they all fail is in
a sense akin to trying to make the individual failures occur at different times.  Intuitively, this should
happen when the other nodes transmit, as much as possible, at different times.

So what we will do is repeatedly transform our distribution to try and separate the transmissions, with
the goal of reaching a distribution where no node transmits at more than one time in $S_v$.
At each step the probability $v$ fails will be non-increasing, and in the end we'll have a distribution
where we can explicitly write down the failure probability.

For each $B \subseteq S_v$, let $y_B$ be the probability that a node $u\neq v$ transmits at the times in $B$,
but does not transmit at any other time in $S_v$.  Suppose that some $U$ with $|U| \geq 2$ has $y_U>0$.
Since each node $u\neq v$ transmits on average fewer than one time in $S_v$ (due to our assumptions bounding the size
 of $S_v$ and the probability a node transmits at a specific time), we must have that $y_{\emptyset} > y_U$.
We now define a new distribution $y'$ as follows:  Fix some $t \in U$, and define
\begin{itemize}
\item{$y_U'=0$}
\item{$y_{U-\{t\}}'=y_{U-\{t\}}+y_U$}
\item{$y_{\{t\}}'=y_{\{t\}}+y_U$}
\item{$y_{\emptyset}'=y_{\emptyset}-y_U$}
\item{for all other $B$, $y_B$ remains unchanged}
\end{itemize}
Effectively what we are doing here is moving mass from $U$ and $\emptyset$ to $U-\{t\}$ and $\{t\}$.
Note that this does not change the values of $q_0,q_1,...,q_{T-1}$.
The intuition above is that this separates the transmission times, so this should reduce the failure
probability of $v$. 
Actually, a stronger monotonicity property is true:


\begin{claim}\label{cla: moving mass}
For any node $w\neq v$, and any transmission pattern of the nodes $u\notin{v,w}$, the probability
that $v$ fails if $w$ transmits according to $y'$ is not larger than the probability that
$v$ fails if $w$ transmits according to $y$.
\end{claim}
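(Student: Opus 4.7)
\textbf{Proof proposal for Claim~\ref{cla: moving mass}.} My plan is to condition on essentially everything except $w$'s choice, reduce the failure event to a simple ``subset containment'' event, and then verify the desired monotonicity by a small finite case analysis on the subset~$A$.

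More precisely, fix $v$, the set $S_v$, the node $w$, and the transmission patterns of every node $u \notin \{v,w\}$. Let $A \subseteq S_v$ be the set of times in $S_v$ at which \emph{none} of those other nodes transmits; at the times in $S_v \setminus A$ the rumor of $v$ is already blocked regardless of $w$, whereas at the times in $A$ the only possible blocker is $w$. Hence, writing $B = T_w \cap S_v$ for the restriction of $w$'s transmission set to $S_v$, the event ``$v$ fails'' is exactly the event $A \subseteq B$. Since $w$'s distribution on $B$ is given by the $y_B$ (or the $y'_B$), the conditional failure probability under $y$ is $\sum_{B \supseteq A} y_B$ and under $y'$ it is $\sum_{B \supseteq A} y'_B$. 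It therefore suffices to prove that
\begin{equation*}
\Delta(A) \;:=\; \sum_{B \supseteq A} \bigl( y'_B - y_B \bigr) \;\le\; 0
\quad \text{for every } A \subseteq S_v.
\end{equation*}

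The reassignment only affects four sets: $y'_U - y_U = -y_U$, $y'_{U\setminus\{t\}} - y_{U\setminus\{t\}} = +y_U$, $y'_{\{t\}} - y_{\{t\}} = +y_U$, and $y'_\emptyset - y_\emptyset = -y_U$. So $\Delta(A)$ is a signed sum of at most four copies of $y_U$, with sign determined by whether $A$ is contained in each of $U$, $U\setminus\{t\}$, $\{t\}$, $\emptyset$. A straightforward case analysis gives: if $A \not\subseteq U$, no term contributes and $\Delta(A) = 0$; if $A = \emptyset$, all four terms contribute and cancel; if $A = \{t\}$, the contributions from $U$ and $\{t\}$ cancel while the others vanish; if $\emptyset \neq A \subseteq U$ with $t \notin A$, then $A \subseteq U \setminus \{t\}$ and again the $U$ and $U \setminus \{t\}$ contributions cancel; finally, if $t \in A \subseteq U$ and $|A| \ge 2$, only the $-y_U$ term from $B = U$ survives, giving $\Delta(A) = -y_U \le 0$. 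In every case $\Delta(A) \le 0$, as required.

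The only delicate point I foresee is making sure the conditioning is valid, i.e., that fixing everything outside $w$ and then resampling $w$ from $y$ (resp.~$y'$) is genuinely equivalent to running the whole experiment with $w$'s marginal replaced, which holds because in the star graph the nodes are independent and the $y_B$'s describe $w$'s distribution on $S_v$-restrictions only. The technical ingredient worth double-checking is that the inequality $y_\emptyset > y_U$ (used by the authors to guarantee $y'_\emptyset \ge 0$) is not needed for monotonicity itself, only for $y'$ to be a legitimate probability distribution; the pointwise inequality $\Delta(A) \le 0$ holds regardless. Everything else is just the bookkeeping above.
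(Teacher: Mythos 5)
Your proof is correct and takes essentially the same route as the paper: your set $A$ is exactly the paper's $B'$ (the times at which $v$ would succeed but for $w$), the reduction of ``$v$ fails'' to the containment event $A \subseteq B$ is the same, and your case check that $\Delta(A) \le 0$ is an algebraic rewrite of the paper's verbal verification that the moved mass goes from failure configurations to success configurations. Your side remark is also accurate: the inequality $y_\emptyset > y_U$ is needed only so that $y'_\emptyset \ge 0$, i.e.\ so that $y'$ is a legitimate distribution, and not for the monotonicity itself.
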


The proof of the claim essentially comes from direct verification. Once we have conditioned on
the transmission pattern of the remaining nodes, we will have some set $B'$ corresponding to
the times at which $v$ transmits, but no node apart from $v$ and (possibly) $w$ transmits
(i.e. the times when $v$ would succeed if $w$ did not transmit at all).  Then node $v$ will
fail exactly when $w$ transmits at every time in $B'$.  We now consider a few cases:
\begin{itemize}
\item{If $B'$ is empty, then $v$ has already failed no matter when $w$ does or does not fire.}
\item{If $B'$ contains some time not in $U$, then moving the mass has no impact on the failure
probability (the mass is moved from one set where $v$ succeeds to another where $v$ succeeds).}
\item{If $B' \subseteq U$ and $B'$ contains an element other than $t$, then $v$ fails when $w$
transmits at the times in $U$, but succeeds when $w$ transmits only at time $t$.
So we are moving mass from a pair of elements with at least one failure to one with at most one
failure; therefore we cannot increase the failure probability.}
\item{If $B'$ consists only of the single element $t$, then the operation moves mass from one
success ($U$) and one failure ($\emptyset$) to one success ($\{t\}$) and one failure ($U-\{t\}$),
so the failure probability is unchanged.}
\end{itemize}
This completes the proof of Claim~\ref{cla: moving mass}.

By applying Claim~\ref{cla: moving mass} in succession to each node, we obtain that the failure
probability when all of the nodes transmit according to $y_B'$ is no larger then when they all
transmit according to $y_B$.  By repeatedly applying this operation, we may assume that $y_B > 0$
only when $B$ is either empty or a singleton.

Let $t_1<t_2<\dots<t_k$ be the times in $S_v$.  For each $i= 1,...,k$, let $g_i$ be the
number of nodes other than $v$ that transmit at time $t_i$, and
let $E_i$ be the event that $1\le g_i \le n^{1/6}$.
We can bound the conditional probability that $v$ fails given $S_v$ from below by the probability
that every event $E_i$ occurs, that is
\begin{equation}
	\bfP(v~\textrm{fails}|S_v) \ge \prod_{i=1}^k \bfP\left( E_i \vline \, E_1, \dots, E_{i-1} \right).
		\label{eqn: bound on v failing | Sv}
\end{equation}
We will bound each term in this product directly by separately bounding the conditional probabilities
that $g_i = 0$ and that $g_i > n^{1/6}$.

Conditioned on $E_1, \dots, E_{i-1}$ all occurring, we know that there must be between $n-o(n)$
and $n$ nodes which have not yet transmitted at time $t_i$.
Under our distribution, each of these nodes transmits at time $t_i$ with probability
$\frac{q_{t_i}}{n-1} \left(1-\frac{q_{t_1}+\dots+q_{t_{i-1}}}{n-1}\right)^{-1}$.
This is at least $\frac{q_{t_i}}{n-1}$, and by our two assumptions it is $\frac{q_{t_i}}{n-1}(1+o(1))$
(recall that $q_{t_i}$ is at most $\ln^{4} n$ and $i$ is at most $\ln^{10} n$).
So the probability that $g_i=0$ is at most
\begin{equation*}
\left(1-\frac{q_{t_i}}{n-1}\right)^{n-n^{o(1)}} = e^{-q_{t_i}}\left(1-n^{-1+o(1)}\right),
\end{equation*}
and the probability that $g_i > n^{1/6}$ is at most
\begin{equation*}
\binom{n}{n^{1/6}}\left(\frac{\ln^4 n}{n}\right)^{n^{1/6}}
	\leq \left( \frac{e \ln^4 n}{n^{1/6}}\right)^{n^{1/6}},
\end{equation*}
which by assumption is much smaller than $e^{-q_{t_i}}$.  Combining our two bounds, we have
\begin{equation*}
\bfP\left( E_i \vline \, E_1, \dots, E_{i-1} \right)
		\geq
		\left(1-e^{-q_{t_i}}\right)\left(1-O(n^{-1+o(1)})\right).
\end{equation*}
We now substitute this bound, for each $i$, into inequality~(\ref{eqn: bound on v failing | Sv}).
Once again using Assumption~1, which implies that the product on the right-hand side
of (\ref{eqn: bound on v failing | Sv}) has $n^{o(1)}$ terms,
the lemma is proved.


\subsection{Removing Assumptions~1 and~2}

So far we have shown that any distribution on transmission times satisfying both Assumptions~$1$ and~$2$
will almost surely lead to at least rumor failing to reach the root.  We next consider distributions 
satisfying Assumption~$2$, but not necessarily Assumption~$1$.

For such a distribution, the expected number of times a given node $v$ transmits is
$\bfE[|S_v|] \le T\cdot \ln^4 n/n \le \ln^6 n$ (for $n$ large enough).  
By Markov's inequality, the probability that $v$ transmits at least $\ln^{10} n$ times is $o(1)$.  
We now think of the transmission sets $S_v$ as being generated in a two step exposure process.  
First, we expose a partition of all nodes into two sets, the set $V_1$ of all
nodes $v$ for which $|S_v|\le \ln^{10} n$, and the set $V_2$ of all other nodes.
We then expose the actual set $S_v$ for all nodes in $V_1$.

By Markov's inequality, we have $|V_1| = (1-o(1))n$ almost surely. 
Applying the previous analysis to the conditional distribution on $V_1$
(which by construction now satisfies Assumption~$1$), we almost surely have a node 
$w\in V_1$ which fails 
to transmit successfully (even if the nodes in $V_2$ do not transmit at all).    
This implies that $w$ fails as well if all nodes participate.

It remains to consider distributions which do not satisfy Assumption~$2$.  Suppose there is 
some collection $B_1$ of times at which each node transmits with probability at least $\frac{\ln^4 n}{n}$.  
Let $E_1$ be the event that at least one node transmits successfully in $B_1$, and 
let $E_2$ be the event all nodes transmit successfully in a time not in $B_1$. 
The total success probability is bounded above by $\bfP(E_1)+\bfP(E_2)$.  
By the previous analysis (since Assumption~$2$ is satisfied once the times in $B_1$ are removed), 
we have $\bfP(E_2)=o(1)$.  For the other term, we have
\begin{eqnarray*}
\bfP(E_1) &\leq& \sum_{t \in B_1} \sum_v \bfP(v \textrm{ succeeds at time } t) 
	\\
	&\leq& \sum_{t \in B_1} \sum_v \bfP(\textrm{each node $u\neq v$ does not transmit at time $t$})
	\\
	&\leq& T n \left(1-\frac{\ln^4 n}{n}\right)^{n-1} 
	\\
	&=& (1+o(1)) T n e^{-\ln^4 n} = o(1).
\end{eqnarray*}


\subsection{Proof of Theorem \ref{thm:nolabelstarbound}}

We now show that for $c> \frac{1}{\ln^2{2}}$ there is a protocol which succeeds in time 
$T=c n \ln n$ with probability approaching $1$.  The protocol is straightforward: 
The set $\braced{0,1, \dots, T}$ of possible transmission times is divided into 
$\frac{T \ln 2}{n} = c \ln 2 \ln n$ disjoint intervals of length $\frac{n}{\ln 2}$ each.  
Each node independently and uniformly chooses a single time in each interval at which to transmit.

Fix some arbitrary node $v\neq r$. Under this protocol, the probability that $v$ succeeds
in a given interval is
\begin{equation*}
\left(1-\frac{\ln 2}{n}\right)^{n-1} = \frac{1}{2}-o(1),
\end{equation*}
so the probability that $v$'s rumor fails to reach the root in each interval is
\begin{equation*}
\left(\frac{1}{2}+o(1)\right)^{c \ln 2 \ln n}=n^{-c\ln^2{2}+o(1)}
 = o(n^{-1}),
\end{equation*}
since $c > \frac{1}{\ln^2{2}}$. This holds for each node $v\neq r$, so we can now apply the union bound, 
obtaining that with probability $1-o(1)$ all rumors reach the root.


\bibliographystyle{plain}
\bibliography{gather_trees}

\end{document}